\definecolor{defblue}{rgb}{0.121,0.47,0.705}
\definecolor{linkblue}{rgb}{0.098,0.098,0.5392}
\definecolor{blue}{rgb}{0.098,0.098,0.5392}
\let\emph\relax
\DeclareTextFontCommand{\emph}{\color{defblue}\em}
\DeclareTextFontCommand{\bl}{\color{defblue}}
\newtheorem{claim}{Claim}{\itshape}{\rmfamily}
\crefname{claim}{Claim}{Claims}
\Crefname{claim}{Claim}{Claims}
{\bfseries}{\itshape}
\crefname{observation}{Observation}{Observations}
\Crefname{observation}{Observation}{Observations}
\newtheorem{corollary}{Corollary}{\bfseries}{\itshape}
\crefname{corollary}{Corollary}{Corollaries}
\Crefname{corollary}{Corollary}{Corollaries}
\newtheorem{lemma}{Lemma}{\bfseries}{\itshape}
\crefname{lemma}{Lemma}{Lemmas}
\Crefname{lemma}{Lemma}{Lemmas}
\newtheorem{theorem}{Theorem}{\bfseries}{\itshape}
\crefname{theorem}{Theorem}{Theorems}
\Crefname{theorem}{Theorem}{Theorems}
\newtheorem{proposition}{Proposition}{\bfseries}{\itshape}
\crefname{proposition}{Proposition}{Propositions}
\Crefname{proposition}{Proposition}{Propositions}
\crefname{condition}{Condition}{Conditions}
\Crefname{condition}{Condition}{Conditions}
\DeclareMathOperator{\ch}{CH}
\newcommand{\vis}{\ensuremath{\mathcal{V}}}
\def\pch{\partial\ch}
\newcommand{\bsp}{balanced separated partition\xspace}
\newcounter{casecounter}
\newcounter{subcasecounter}
\renewcommand{\thesubcasecounter}{\thecasecounter.\arabic{subcasecounter}}
\newcounter{subsubcasecounter}
\renewcommand{\thesubsubcasecounter}{\thesubcasecounter.\arabic{subsubcasecounter}}
\newcommand{\ccase}[1]{%
  \refstepcounter{casecounter}%
  \setcounter{subcasecounter}{0}%
  \setcounter{subsubcasecounter}{0}%
  \medskip\noindent\textbf{Case \thecasecounter.}
  \label{#1}
}
\newcommand{\subcase}[1]{%
  \refstepcounter{subcasecounter}%
  \setcounter{subsubcasecounter}{0}%
  \noindent\textbf{Case~\thesubcasecounter.}
  \label{#1}
}
\newcommand{\subsubcase}[1]{%
  \refstepcounter{subsubcasecounter}%
  \noindent\textbf{Case~\thesubsubcasecounter.}
  \label{#1}
}
\crefname{casecounter}{Case}{Cases}
\crefname{subcasecounter}{Case}{Cases}
\crefname{subsubcasecounter}{Case}{Cases}
\Crefname{casecounter}{Case}{Cases}
\Crefname{subcasecounter}{Case}{Cases}
\Crefname{subsubcasecounter}{Case}{Cases}
\journal{Discrete Mathematics}
\begin{document}

\begin{frontmatter}

\title{Three Edge-disjoint Plane Spanning Paths in a Point Set}

\author[trier]{Philipp Kindermann}
\ead{kindermann@uni-trier.de}
\address[trier]{FB IV - Computer Science, Trier University, Trier, Germany}

\author[prague]{Jan Kratochvíl}
\ead{honza@kam.mff.cuni.cz}
\address[prague]{Department of Applied Mathematics, Faculty of Mathematics and Physics, Charles University, Prague, Czech Republic}

\author[perugia]{Giuseppe Liotta}
\ead{giuseppe.liotta@unipg.it}
\address[perugia]{Department of Engineering, University of Perugia, Perugia, Italy}

\author[prague]{Pavel Valtr}
\ead{valtr@kam.mff.cuni.cz}

\begin{abstract}
We consider the following problem: Given a set $S$ of $n$ distinct points in the plane, how many edge-disjoint plane straight-line spanning paths can be drawn on $S$? Each spanning path must be crossing-free, but edges from different paths are allowed to intersect at arbitrary points. It is known that if the points of $S$ are in convex position, then $\lfloor n/2 \rfloor$ such paths always exist. However, for general point sets, the best known construction yields only two edge-disjoint plane spanning paths.

In this paper, we prove that for any set $S$ of at least ten points in general position (i.e., no three points are collinear), it is always possible to draw at least three edge-disjoint plane straight-line spanning paths. Our proof relies on a structural result about halving lines in point sets and builds on the known two-path construction, which we also strengthen: we show that for any set $S$ of at least six points, and for any two specified points on the boundary of the convex hull of $S$, there exist two edge-disjoint plane spanning paths that start at those prescribed points.

Finally, we complement our positive results with a lower bound: for every $n \geq 6$, there exists a set of $n$ points for which no more than $\lceil n/3 \rceil$ edge-disjoint plane spanning paths are possible.
\end{abstract}

\begin{keyword}
    Plane Spanning Paths \sep Point Sets \sep Geometric Graph Theory
\end{keyword}

\end{frontmatter}

\section{Introduction}\label{sec:Intro}

Let $S$ be a set of distinct points (locations) in the plane. We want to compute three \emph{edge-disjoint} spanning paths of $S$. Note that the edges of each path are straight-line segments and that no two edges of a same path can cross while the edges of distinct spanning paths may cross (i.e. share points that are not elements of $S$).

At a first glance, one would be tempted to generalize the question and  study the existence of $k$ such edge-disjoint spanning paths with $k\geq 2$. Namely, the proof of Bernhart and Kainen about the book thickness of a complete graph \cite[Theorem 3.4]{DBLP:journals/jct/BernhartK79} already gives a partial answer: if the $n$ points are in convex position, then it is possible to draw  $\lfloor \frac{n}{2} \rfloor$ edge-disjoint plane straight-line spanning paths of the point set which is also a tight upper bound for even values of $n$ (the complete graph has $\frac{n(n-1)}{2}$ edges). However, little is known when the $n$ points are not in convex position: The only result we are aware of is by Aichholzer et al.~\cite{ahkklpsw-ppstpcgg-17}, who show the existence of two edge-disjoint plane straight-line spanning paths for any set of $n\geq 4$ points in general position (no three collinear). Aichholzer et al.\ leave %
open the problem of proving whether three or more paths always exist. Our main result is as follows.

\begin{restatable}{theorem}{ThmMain}
\label{thm:main}
Let $S$ be any set of at least ten points in general position in the plane.  There are three edge-disjoint plane straight-line spanning paths of $S$.
\end{restatable}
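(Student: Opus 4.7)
The plan is to use the structural theorem on halving lines to partition $S$ into two balanced halves sharing a common ``seam'', apply the strengthened two-paths theorem to each half, and then stitch the four resulting spanning paths across the seam into three edge-disjoint plane spanning paths of $S$.

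First, I invoke the halving-lines structural theorem to select a line $\ell$ through two points $p, q \in S$ such that each open half-plane determined by $\ell$ contains at least $\lceil (n-2)/2 \rceil \geq 4$ points of $S$. Let $A$ and $B$ be the subsets of $S$ lying in the two closed half-planes bounded by $\ell$; then $A \cap B = \{p,q\}$, $A \cup B = S$, and $|A|, |B| \geq 6$. Both $p$ and $q$ lie on $\partial \ch(A)$ and on $\partial \ch(B)$ because $\ell$ is a supporting line of each of $\ch(A)$ and $\ch(B)$.

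Second, I apply the strengthened two-paths theorem to $A$ with prescribed starting vertices $p$ and $q$, obtaining two edge-disjoint plane spanning paths $\alpha_1, \alpha_2$ of $A$ that start at $p$ and at $q$, respectively. I do the same for $B$ to get $\beta_1, \beta_2$. Every edge of an $\alpha_i$ lies in the closed upper half-plane, while every edge of a $\beta_j$ lies in the closed lower half-plane; in particular the two families are edge-disjoint, and (by general position) they meet only at the vertices $p, q$, so no $\alpha$-edge crosses any $\beta$-edge. Hence the planarity of any path of $S$ formed by concatenating pieces of $\alpha_i$'s and $\beta_j$'s needs only be verified on each side separately.

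Third, I assemble three edge-disjoint plane spanning paths of $S$ out of $\alpha_1, \alpha_2, \beta_1, \beta_2$, supplemented if necessary by the bridge edge $pq$. The key technique is a cut-and-splice at the shared vertices: since $\alpha_1$ starts at $p$ and must also visit $q$, it can be split at $q$ into two sub-paths, each of which can then be rejoined with a sub-path of an appropriate $\beta_j$ that begins or ends at $q$, producing a spanning path of $S$ that visits $q$ only once. By coordinating which portions of $\alpha_1, \alpha_2, \beta_1, \beta_2$ are spliced together across $p$, across $q$, and across the bridge $pq$, I aim to extract three pairwise edge-disjoint plane spanning paths of $S$.

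The main obstacle is exactly this combination step: we have only four side-paths but must produce three spanning paths of $S$, so the splicing has to be tight and must juggle the fact that each $\alpha_i$ and each $\beta_j$ covers both shared vertices $p$ and $q$. The cuts must be coordinated across all four side-paths so that each of the three resulting combinations (i) visits every vertex of $S$ exactly once, (ii) is edge-disjoint from the other two combinations, and (iii) remains plane on each side. This bookkeeping will likely require a case analysis on where $p$ and $q$ appear within each side-path (as interior vertices or as endpoints), and will probably rely on additional guarantees furnished by the balanced separated partition from the halving-lines theorem---in particular, control over which edges incident to $p$ or $q$ can be forced into (or kept out of) the side-paths.
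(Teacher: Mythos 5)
Your assembly step cannot work, for a reason you can see by counting edges before doing any case analysis. With $A\cap B=\{p,q\}$ you have $|A|+|B|=n+2$, so the four side-paths $\alpha_1,\alpha_2,\beta_1,\beta_2$ contain at most $2(|A|-1)+2(|B|-1)=2n$ edges in total, and adding the bridge $pq$ gives at most $2n+1$ available edges. Three pairwise edge-disjoint spanning paths of $S$ require $3(n-1)=3n-3$ distinct edges, and $3n-3>2n+1$ for every $n\ge 5$. So no splicing of pieces of the four side-paths at $p$, at $q$, and across $pq$ can ever produce three edge-disjoint spanning paths of $S$ when $n\ge 10$; the ``bookkeeping'' obstacle you flag at the end is not a technical difficulty but an impossibility. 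You are missing an entire third supply of edges, namely edges that cross the separating line at points other than $p$ and $q$.

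This is exactly where the paper's proof differs. It uses a \emph{separated} balanced partition $(S_1,S_2)$ with $\ch(S_1)\cap\ch(S_2)=\emptyset$ (no shared points), and takes as the first spanning path a zig-zag path $Z$ for this partition: all $n-1$ of its edges go between $S_1$ and $S_2$, so $Z$ is automatically edge-disjoint from anything built out of edges internal to the parts. The other two paths are then obtained from \cref{thm:pq-paths} applied inside each part (two paths per part with prescribed endpoints, as you intend) and joined by two \emph{crossing} edges of the visibility graph $\vis(S_1,S_2)$ that are not used by $Z$; this gives $(2n-4)+2+(n-1)=3n-3$ edges, exactly enough. The real work of the paper --- \cref{thm:crucial} together with \cref{lem:2freeedges,lem:bigN,lem:diamond,lem:bigN+} and the wheel case --- is in guaranteeing that two such free crossing edges exist or that $Z$ can be locally rerouted to create them. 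If you want to salvage your outline, replace the overlapping halves by a separated partition and make one of your three paths consist of crossing edges; you will then run into the genuine difficulty of finding the two spare crossing edges, which is what the halving-line structural theorem is actually for.
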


Besides addressing an open problem by Aichholzer et al.~\cite{ahkklpsw-ppstpcgg-17}, 
\cref{thm:main} relates with some classical topics in the 
graph drawing literature. Among them, the \emph{graph packing problem} asks 
whether it is possible to map a set of smaller graphs into a larger graph, 
called the \emph{host graph}, without using the same edge of the host graph 
twice. A rich body of literature is devoted to this problem, both when the 
host graph is the complete graph and when the smaller graph is either planar 
or near-planar~\cite{Bollobas1978,DBLP:journals/jocg/GeyerHKKT17,DBLP:journals/ajc/HalerW14,Hedetniemi1982,DBLP:journals/jgaa/LucaGHKLLMTW21,Teo1990}.
While most papers devoted to graph packing  do not assume that a drawing of 
the host graph is given as part of the input, our study considers a 
\emph{geometric graph packing problem}, as we want to map three plane geometric 
paths with $n$ vertices into a complete geometric graph $K_n$. Bose et 
al.~\cite{DBLP:journals/comgeo/BoseHRW06} give a characterization of those plane 
trees that can be packed in a complete geometric graph $K_n$ in the special 
case that the vertices of $K_n$ are in convex position. Aichholzer et 
al.~\cite{ahkklpsw-ppstpcgg-17} show that $\Omega(\sqrt{n})$ edge-disjoint plane 
trees can be packed into a complete geometric graph with $n$ vertices, but it 
is not known whether this lower bound extends to paths. Biniaz et al.~\cite{DBLP:journals/dmtcs/BiniazBMS15} 
show that every set of $n$ points in general 
position admits $\lceil \log_2 n \rceil -1$ edge disjoint perfect matchings 
and that there exist point sets for which the maximum number of  edge 
disjoint perfect matchings is at most $\lceil \frac{n}{3} \rceil$.  From the 
perspective of geometric graph packing problems, \cref{thm:main} directly 
implies the following.

\begin{corollary}\label{thm:main-alt}
Three edge-disjoint plane Hamiltonian paths can be packed into every complete geometric graph with at least ten vertices.
\end{corollary}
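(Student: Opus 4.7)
The plan is to recognize that \cref{thm:main-alt} is essentially a restatement of \cref{thm:main} in the language of graph packing. A complete geometric graph on $n$ vertices is, by definition, the complete graph $K_n$ drawn with a fixed injective mapping of its vertices to points in the plane and edges realized as straight-line segments; this is the same datum as a set $S$ of $n$ distinct points in the plane (taken in general position, as is standard to avoid degenerate edges). A plane Hamiltonian path in this geometric graph is precisely a crossing-free straight-line spanning path of $S$, and two Hamiltonian paths are edge-disjoint as combinatorial subgraphs if and only if the two corresponding spanning paths of $S$ share no segment between the same pair of points.

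Given this dictionary, the proof consists of a single step: apply \cref{thm:main} to the point set $S$ underlying the given complete geometric graph on $n \geq 10$ vertices to obtain three edge-disjoint plane straight-line spanning paths of $S$, and reinterpret them as three edge-disjoint plane Hamiltonian paths of the geometric graph. There is no real obstacle; all of the mathematical content sits inside \cref{thm:main}, and the only thing worth verifying is that the correspondence between the geometric and combinatorial formulations is faithful, which is immediate from the definitions of ``plane'' (crossing-free) and ``edge-disjoint'' (no shared pair of endpoints). Thus the corollary is a purely linguistic corollary of the main theorem.
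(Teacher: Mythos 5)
Your proposal is correct and matches the paper, which offers no separate argument for the corollary beyond noting that it "directly" follows from \cref{thm:main}; the translation between point sets in general position and complete geometric graphs is exactly the intended (and only) content of the deduction.
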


As an additional result, we prove that for any $n \geq 6$, there exists a point set of size $n$ for which the maximum number of edge-disjoint plane spanning paths is at most $\lceil \frac{n}{3} \rceil$.

\paragraph{Paper organization} The rest of the paper is organized as follows. In \cref{sec:prelim}, we introduce technical notions and briefly recall the concept of zig-zag paths of Abellanas et al.~\cite{DBLP:journals/dam/AbellanasGHNR99}, and then in \cref{sec:Overview}, we present an overview of the  proof of our main theorem. \cref{sec:Twopath} is devoted to a strengthening of the result of Aichholzer et al.~\cite{ahkklpsw-ppstpcgg-17} on the existence of two paths even when the starting points are prescribed, which we believe is also interesting on its own. \cref{sec:Main} then presents the proof of our main theorem in detail. \Cref{se:Upper} contains the proof of the upper bound, while a final discussion with open problems can be found  in \cref{se:conclusion}.%

\section{Preliminaries}\label{sec:prelim}

We denote a path in a graph by its sequence of vertices and edges. For a path $P$ starting at $v_1$, ending at $v_h$, and visiting its vertices $v_i$ in increasing order of the subscript $i$ from $1$ to $h$, we represent the sequence as $P = v_1 \circ v_1v_2 \circ v_2\circ \cdots \circ v_{h-1}v_h \circ v_h$. Thus, the concatenation of vertex-disjoint paths $P$ (ending with vertex $x$) and $Q$ (starting with vertex $y$ adjacent to $x$) is the path $P\circ xy \circ Q$.  By $P^{-1}$ we denote the path $P$ traversed in the reversed order.

Let $S$ be  a set of points in general position (i.e., no three collinear) in the plane and let $p,q$ be two points of $S$.
We denote by $\overline{pq}$ the line passing through them, and by ${pq}$ its line segment with endpoints $p$ and $q$.
The \emph{geometric graph} $G(S)$ determined by $S$ is the straight-line drawing of the complete graph with vertex set $S$. A \emph{path on} $S$ is a straight-line drawing of a path in $G(S)$. For the sake of brevity we shall omit the term ``straight-line''.
A path on $S$ is \emph{spanning} if its set of vertices is the entire set~$S$.  We shall say \emph{plane (spanning) path} to mean that the (spanning) path is crossing-free.

We denote by $\ch(S)$ the convex hull of $S$ and by  $\pch(S)$ the boundary of its convex hull. Points of $S\cap\pch(S)$ are the {\emph{extreme points}} of $S$. We will call a partition $S=S_1\cup S_2$ a \emph{\bsp} if the two sets are almost equal in sizes (i.e., $||S_1|-|S_2||\le 1$) and $\ch(S_1)\cap \ch(S_2)=\emptyset$. In such a case we denote the partition as $(S_1,S_2)$. The boundary of $\ch(S)$ contains two edges with one end-point in $S_1$ and the other one in $S_2$; such edges are called \emph{bridges} of the partition, and each vertex incident with a bridge is called \emph{bridged}. A line $\ell$ is a \emph{balancing line} for a set $S$ if the intersections of $S$ with the open half-planes determined by $\ell$ form  a \bsp\ of $S\setminus\ell$. Note that a balancing line for $S$ may contain 0, 1 or 2 points of $S$. Every point of $ S$ belongs to at least one balancing line passing through
this point, but not every two points of $S$ belong to the same balancing line. However, every set (of size at least 2) contains two points which determine a balancing line.

If $(S_1,S_2)$ is a \bsp of $S$, a \emph{zig-zag $(S_1,S_2)$-path} is a plane spanning path in $S$ in which the points from $S_1$ and $S_2$ alternate. When $S_1$ and $S_2$ are clear from the context, we just call it a zig-zag path. It is well known that a zig-zag path exists for every \bsp {} of $S$~\cite{DBLP:journals/dam/AbellanasGHNR99,DBLP:journals/bit/HershbergerS92}.

\begin{lemma}[\cite{DBLP:journals/dam/AbellanasGHNR99}]\label{lem:zigzag}
	Every \bsp  admits a zig-zag path.
\end{lemma}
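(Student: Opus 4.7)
My plan is to prove a slight strengthening of the lemma by induction on $|S|$: for every \bsp{} $(S_1,S_2)$ with $|S_1|\ge|S_2|$ and every bridge $pq$ with $p\in S_1$ and $q\in S_2$, there exists a zig-zag $(S_1,S_2)$-path whose first vertex is $p$. The lemma itself then follows at once, since every \bsp{} with $|S_1|,|S_2|\ge 1$ admits at least one bridge (and the remaining case $|S|\le 1$ is trivial). The base case $|S|=2$ of the stronger statement is also immediate: then $|S_1|=|S_2|=1$ and the zig-zag is simply the edge $pq$.

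For the inductive step I would delete the prescribed vertex $p$ and work on $S'=S\setminus\{p\}$ with the reduced partition $(S_1\setminus\{p\},S_2)$. This is again a \bsp{}: the convex hulls remain disjoint because $\ch(S_1\setminus\{p\})\subseteq\ch(S_1)$, and the balance condition is preserved since $p$ was removed from the weakly larger side (if $|S_1|=|S_2|$, the weakly larger side simply switches to $S_2$, which is consistent with applying the hypothesis at~$q$). I would then invoke the induction hypothesis on $S'$ with $q$ as the prescribed starting vertex to obtain a zig-zag path $P'$ on $S'$ that begins at $q$, and prepend $p$ via the edge $pq$ to form the desired path $p\circ pq\circ P'$ on $S$.

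The key step, which I expect to be the main technical obstacle, is verifying that $q$ remains a bridged vertex of the reduced partition so that the induction hypothesis is actually applicable. The point $q$ is still extreme in $\ch(S_2)$ since $S_2$ is untouched, and the segment $pq$ was originally a common supporting line of $\ch(S_1)$ and $\ch(S_2)$; rotating this line about $q$ into the region vacated by $p$, it must eventually meet a first point $p'\in S_1\setminus\{p\}$, and the resulting segment $p'q$ is an outer common tangent of $\ch(S_2)$ and $\ch(S_1\setminus\{p\})$, hence a bridge of the reduced partition incident to~$q$. A short case analysis on whether $p'$ was already extreme in $\ch(S_1)$ or was previously hidden behind $p$ confirms in either case that such a $p'$ exists whenever $S_1\setminus\{p\}\ne\emptyset$.

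Finally, the plane and alternation properties of the concatenation are easy to verify. The segment $pq$ lies on $\pch(S)$ and therefore cannot cross any edge with both endpoints in $S'\subseteq S$; in particular it is disjoint from every edge of $P'$, which is plane by the induction hypothesis. Moreover $P'$ starts in $S_2$, so the vertices of $p\circ pq\circ P'$ alternate between $S_1$ and $S_2$ throughout, completing the induction and thereby the proof of the lemma.
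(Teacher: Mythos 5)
There is a genuine gap in the inductive step, at exactly the point you flag as the main technical obstacle. After deleting $p$, the point $q$ need \emph{not} be a bridged vertex of the reduced partition, and your rotation argument does not repair this: when you rotate the line $\overline{pq}$ about $q$ into the region vacated by $p$, the first point of $S\setminus\{p\}$ you meet can lie in $S_2$, namely a point of $S_2$ that was hidden behind $p$ inside $\ch(S)$ and now emerges onto $\pch(S\setminus\{p\})$ between $q$ and the new bridge. Concretely, take the separating line $y=0$, $S_1=\{p=(-1,2),\,a=(1,1),\,b=(2,1)\}$ and $S_2=\{q=(0,-1),\,z=(-0.1,-0.5),\,r=(3,-1)\}$. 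One checks that $pq$ is the left bridge of this \bsp, but $\pch(S\setminus\{p\})$ is the pentagon $z,a,b,r,q$, whose bridges are $za$ and $br$; the hull-neighbors of $q$ are now $z$ and $r$, both in $S_2$, so $q$ is not bridged and your induction hypothesis is not applicable. The failure is not merely formal: in this reduced instance \emph{no} zig-zag path starting at $q$ exists at all (the four alternating orders $q,a,z,b,r$; $q,a,r,b,z$; $q,b,z,a,r$; $q,b,r,a,z$ contain the crossing pairs $qa\times zb$, $ar\times bz$, $qb\times ar$, $qb\times ra$, respectively), so no strengthening of the bridgedness argument can rescue the step as designed.

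The underlying issue is that your strengthened statement prescribes the \emph{second} vertex of the path (forcing the edge $pq$), whereas the algorithm of Abellanas et al.\ that the paper sketches only prescribes the first: after outputting $p$, it recomputes the left bridge of $S\setminus\{p\}$ and continues to the appropriate endpoint of that \emph{new} bridge, which in the example above is $z$ rather than $q$ (yielding the plane path $p,z,a,q,b,r$). The correct invariant to carry through the induction is therefore ``there is a zig-zag path starting at the $S_1$-endpoint of the left bridge,'' with the recursion continuing along whatever the new left bridge happens to be; planarity is then argued not via ``$pq$ lies on $\pch(S)$'' (which holds only for the first edge) but via the observation, recorded in the paper, that the crossing points of the path edges with the separating line appear in the same left-to-right order along the path as along the line. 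I would restructure your induction around that invariant.
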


The algorithm to build a zig-zag path by Abellanas et al.~\cite{DBLP:journals/dam/AbellanasGHNR99} works roughly as follows.
Assume that $(S_1,S_2)$ is a partition of $S$, $S_1$ and $S_2$ are separated by a horizontal line and that $|S_1|\ge |S_2|$. Let $(p_1,q_1)$ be the \emph{left bridge} of $S$, i.e., the left edge of $\ch(S)$ crossing the separating line, with $p_1\in S_1$ and $q_1\in S_2$. Start with $P=p_1$, let $V(P)$ the set of points in $P\cup S$, and let $r$ be the last point of $P$. Inductively compute the left bridge $pq$ of $S\setminus V(P)$, and set $P=P\circ rp\circ p$ if $r$ in $S_2$, or set $P=P\circ rq\circ q$ if $r$ was in $S_1$. Continue this process until all vertices of $S$ are added to $P$. Then $P$ is a zig-zag path. Note here that if $|S_1|=|S_2|$, we may choose if the zig-zag path starts in $S_1$ or in $S_2$, but when the sets $S_1$ and $S_2$ are not equal in sizes, every zig-zag path must start in the bigger one of them. And that on a zig-zag path constructed in the above sketched way, the crossing points of the edges of the path with the separating horizontal line have the same linear order along the path and along the separating line.

\section{Approach overview}\label{sec:Overview}

The main idea of our approach is to construct three edge-disjoint plane spanning paths $Z,P,Q$ on a point set $S$ with $|S|\ge 10$ as follows. We find a suitable \bsp $(S_1,S_2)$ of $S$. The first path $Z$ will be the zig-zag path obtained by \cref{lem:zigzag}. For $P$ and $Q$, we seek to find two edge-disjoint plane spanning paths $P_1,Q_1$ (ending in $p_1$ and $q_1$, respectively) in $S_1$ and two  edge-disjoint plane spanning  paths $P_2,Q_2$ (starting in $p_2$ and $q_2$, respectively) in $S_2$; these are obviously edge-disjoint with $Z$. If $p_1p_2$ and $q_1q_2$ do not belong to $Z$ and their interiors are disjoint with $\ch(S_1)\cup\ch(S_2)$, we combine these four paths to two edge-disjoint plane spanning paths $P=P_1\circ p_1p_2 \circ P_2$ and $Q_1\circ q_1q_2\circ Q_2$ in $S$. To this end, we employ the strategy in the opposite order: we start with finding two pairs of vertices on $\partial\ch(S_1)$ and $\partial\ch(S_2)$ that see each other (i.e., their connections do not go through $\ch(S_1)\cup\ch(S_2)$) and that are not connected by an edge in $Z$, and then find spanning trees in $S_1$ and $S_2$ that start in these vertices. See also \cref{fig:overview} for a schematic description of our approach.

\begin{figure}[t!]
\centering
\subcaptionbox{$Z$\label{fig:overview-P1}}{\includegraphics[page=2]{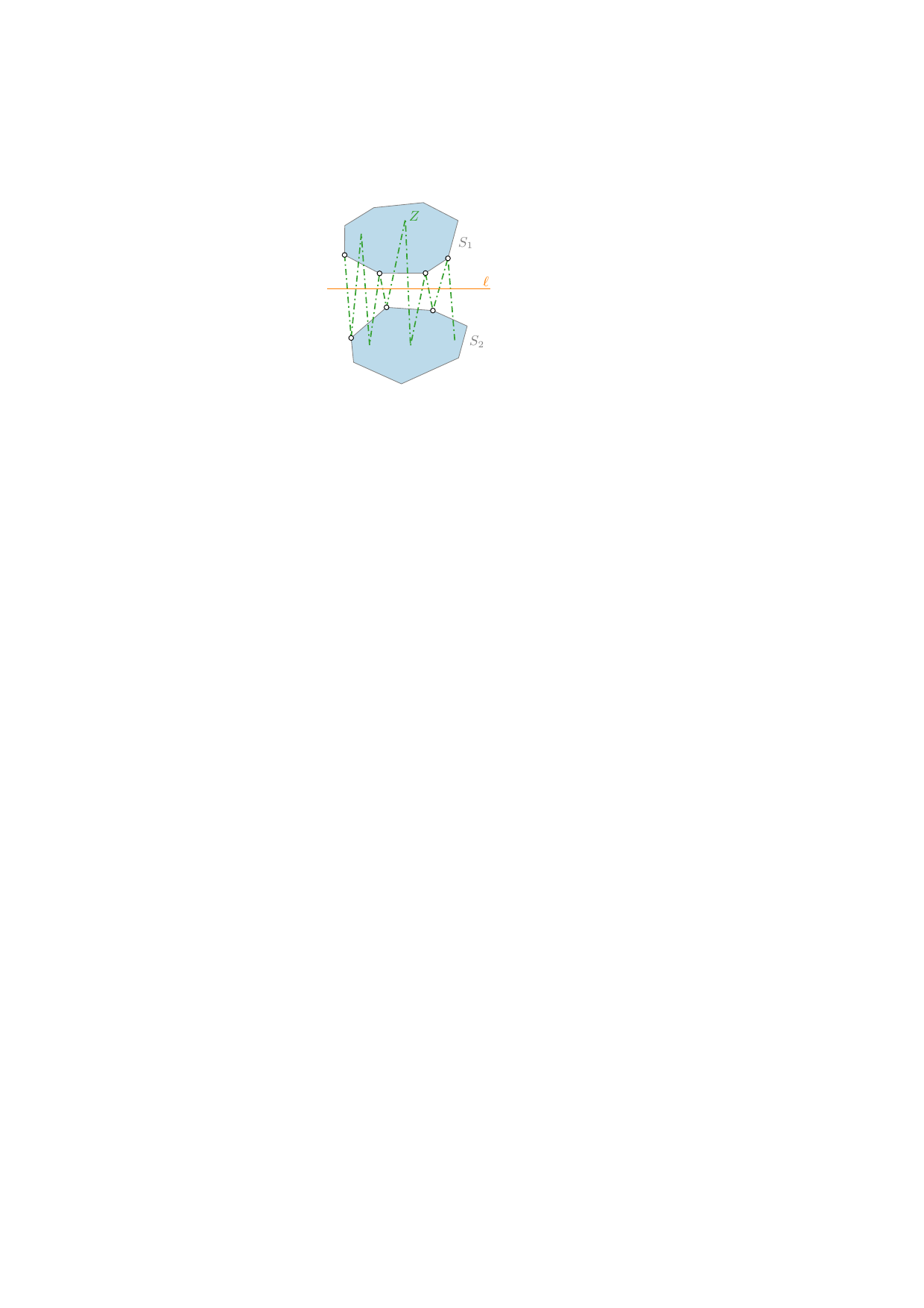}}
\hfill
\subcaptionbox{$P$\label{fig:overview-P2}}{\includegraphics[page=3]{overview}}
\hfill
\subcaptionbox{$Q$\label{fig:overview-P3}}{\includegraphics[page=4]{overview}}
\caption{Schematic illustration of the approach behind the proof of \cref{thm:main}.}
\label{fig:overview}
\end{figure}

There are several difficulties: First, we cannot use the algorithm by Aichholzer et al.~\cite{ahkklpsw-ppstpcgg-17} to find the two paths in $S_1$ and $S_2$, as that does not control the starting and ending points of the paths. %
Hence, we strengthen their theorem and prove that one can always find two edge-disjoint plane spanning paths even if their starting points  are prescribed (\cref{thm:pq-paths}). Secondly, it might not be possible to find two pairs of vertices on the convex hulls with our desired properties. However, we prove that in most of such situations, the zig-zag path can be slightly modified so that the connection is possible. Lastly, we show that all of the previously described moves fail in one and only one very  specific configuration, which allows  three edge-disjoint plane spanning paths to be constructed easily in an ad hoc way, thus establishing \cref{thm:main}.

\section{Two edge-disjoint plane spanning paths with prescribed starting points}\label{sec:Twopath}

Let $S$ be a set of at least five distinct points in the plane in general position and let $s$ and $t$ be two distinguished elements of $S$, possibly coincident. In this section we show that there exist two edge-disjoint plane spanning paths of $S$, one starting at $s$ and the other one starting at $t$ such that $st$ is not an edge of either path. To this aim, we start with some basic properties of planar point sets.

Let $p$ be a point outside $\ch(S)$. We say that $p$ \emph{sees} a point $q\in S$ if ${pq}\cap \ch(S) = \{q\}$.
We denote by $S(p)$ the set of (extreme) points of $S$ that are seen from $p$.

The following lemma follows immediately from a radial sweep around $p$; see \cref{fig:point-visibility}.

\begin{lemma}
\label{lem:seestwo}
Let $|S|\ge 3$ and let $p$ be a point outside $\ch(S)$. Then $p$ sees at least two points of $S$.
	Moreover, $S(p)$ forms a continuous interval in $S\cap\pch(S)$ (along $\pch(S)$).
\end{lemma}

\begin{figure}[t]
\centering
\subcaptionbox{\cref{lem:seestwo}\label{fig:point-visibility}}{\includegraphics{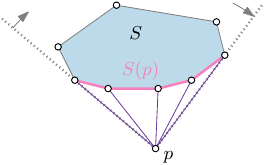}}
\hfil
\subcaptionbox{\cref{lem:claimbad2case}\label{fig:bad2case}}{\includegraphics{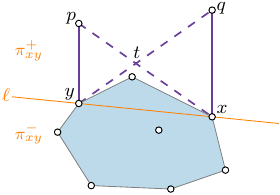}}
\caption{Illustration for the proof of (\subref{fig:point-visibility}) \cref{lem:seestwo} and (\subref{fig:bad2case}) \cref{lem:claimbad2case}.}
\end{figure}

\begin{lemma}
\label{lem:claimbad2case}
Let $|S|\ge 3$ %
	and let $p,q$ be 2 distinct points outside $\ch(S)$ such that $S\cup\{p,q\}$ is in general position; let $x$ and $y$ be two extreme points of $S$. If $x\in S(p), y\in S(q)$ and the (visibility) segments $px$ and $qy$ cross in an interior point,  then $\{x,y\}\subseteq S(p)\cap S(q)$ and $py\cap qx=\emptyset$.
\end{lemma}

\begin{proof}
Let $\ell=\overline{xy}$, let $\pi^+_{xy}$ be the half-plane determined by $\ell$ that contains the crossing point $t$ of $px$ and $qy$, and let $\pi^-_{xy}$ be the opposite half-plane determined by $\ell$; see \cref{fig:bad2case}. Let $\Delta xyt$ be the region bounded by the triangle $xyt$. Clearly $p,q\in \pi^+_{xy}$. Then $S\subseteq \pi^-_{xy}\cup \Delta xyt$, hence also $\ch(S)\subseteq \pi^-_{xy}\cup \Delta xyt$ (here we are using the fact that both $x$ and $y$ lie on $\pch(S)$), and thus $py\cap \ch(S)=\{y\}$ and $qx\cap \ch(S)=\{x\}$. The segments $py$ and $qx$ are non-crossing, since the edges of a complete graph on 4 vertices ($x,y,p,q$) may cross in at most one point, and this crossing is already consumed by $t$.
\end{proof}

\begin{lemma}
\label{lem:bad2case}
Let $|S|\ge 3$ %
and let $p,q$ be two distinct points outside $\ch(S)$ such that $S\cup\{p,q\}$ is in general position. %
Assume $|S(p)\cup S(q)|\ge 3$. Then for any point $c\in S(q)$, there exist points $a\in S(p)$ and $b\in S(q)\setminus\{c\}$ such that $ap\cap bq = \emptyset$.
\end{lemma}

\begin{proof}
If each of $p$ and $q$ sees exactly 2 points of $S$ which lie on $\pch(S)$, then they must see different pairs of points and no two of the visibility segments cross, and the statement is clear; see \cref{fig:bad2case2-22}.

Suppose at least one of $p$, $q$ sees at least 3 points of $S$ which lie on $\pch(S)$.

If $|S(q)|=2$, say $S(q)=\{c,d\}$, then we choose $b=d$ and as $a\in S(p)$, we choose a point different from $c$ and $d$ (there must be at least one, since $|S(p)|\ge 3$); see \cref{fig:bad2case2-23} Then $pa\cap qb = \emptyset$, since otherwise $a\in S(q)$ by \cref{lem:claimbad2case}, and $|S(q)|\ge 3$.

If $|S(q)|\ge 3$, we set $x$ to be a point in $S(p)\setminus \{c\}$ and $y$ to be a point in $S(q)\setminus\{x,c\}$; see \cref{fig:bad2case2-33}. By \cref{lem:seestwo}, $|S(p)|\ge 2$, and thus the points $x$ and $y$ exist, they are distinct and both of them are distinct from $c$. If $px\cap qy=\emptyset$, we set $a=x$ and $b=y$. If $px$ and $qy$ cross, then \cref{lem:claimbad2case} implies that $y\in S(p), x\in S(q)$ and $py$ does not cross $qx$. Hence, we set $a=y$ and $b=x$.
\end{proof}

\begin{figure}[t]
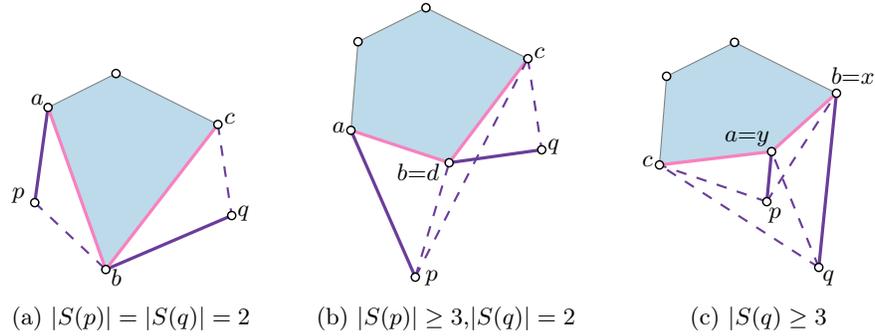

\centering
\subcaptionbox{$|S(p)|=|S(q)|=2$\label{fig:bad2case2-22}}{\includegraphics[page=2]{bad2case.pdf}}
\hfil
\subcaptionbox{$|S(p)|\ge 3$,$|S(q)|=2$\label{fig:bad2case2-23}}[.35\linewidth]{\includegraphics[page=3]{bad2case.pdf}}
\hfil
\subcaptionbox{$|S(q)\ge 3$\label{fig:bad2case2-33}}{\includegraphics[page=4]{bad2case.pdf}}
\caption{Illustration the proof of \cref{lem:bad2case}.}
\label{fig:bad2case2}
\end{figure}

\begin{lemma}\label{lem:pq-path}
Let $s$ and $t$ be two distinct points  of $S$. Then $S$ contains a plane spanning path which starts at $s$ and ends at $t$.
\end{lemma}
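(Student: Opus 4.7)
The plan is to proceed by induction on $|S|$. The base case $|S|=2$ is trivial: the single edge $st$ is a plane spanning path from $s$ to $t$. For the inductive step with $|S|\geq 3$, I would distinguish cases according to whether $s$ or $t$ lies on $\partial\ch(S)$, and in each case reduce to a smaller instance by removing a vertex of the convex hull.

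Suppose first that $s\in\partial\ch(S)$. Since $\partial\ch(S)$ has at least three vertices, at least one of the two hull neighbors of $s$ is distinct from $t$; call it $r$. By the inductive hypothesis applied to $S\setminus\{s\}$ with endpoints $r$ and $t$, there is a plane spanning path $P'$ from $r$ to $t$, and prepending the edge $sr$ yields a spanning path $P=s\circ sr\circ P'$ of $S$ from $s$ to $t$. The path $P$ is plane because $sr$ is an edge of $\partial\ch(S)$: the line $\overline{sr}$ leaves all of $S\setminus\{s,r\}$ strictly to one side, so every edge of $P'$ either avoids $\overline{sr}$ entirely (when both endpoints lie in $S\setminus\{s,r\}$) or meets it only at the shared vertex $r$. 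The symmetric case $t\in\partial\ch(S)$ is handled analogously, by picking a hull neighbor $r$ of $t$ distinct from $s$ and appending $t$ to an inductively obtained plane spanning path from $s$ to $r$.

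It remains to handle the case where both $s$ and $t$ lie strictly inside $\ch(S)$. Here I would pick any vertex $r\in\partial\ch(S)$ (automatically $r\notin\{s,t\}$), apply the inductive hypothesis to $S\setminus\{r\}$ with endpoints $s,t$ to obtain a plane spanning path $P'$ from $s$ to $t$ in $S\setminus\{r\}$, and then insert $r$ into $P'$: find consecutive vertices $u,v$ on $P'$ such that the triangle $ruv$ contains no other point of $S\setminus\{r\}$ in its interior and the segments $ur,rv$ cross no edge of $P'$, and then replace $uv$ in $P'$ by $u\circ ur\circ r\circ rv\circ v$ to obtain the required plane spanning path of $S$ from $s$ to $t$. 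Since $r\notin\ch(S\setminus\{r\})$, \cref{lem:seestwo} guarantees that $r$ sees a contiguous arc of vertices on $\partial\ch(S\setminus\{r\})$; moreover, $P'$ is a tree and thus has a unique (unbounded) face in its planar embedding, which contains $r$. The plan is to identify an edge of $P'$ entirely visible from $r$ within this face and use it as the insertion location.

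The main obstacle is the insertion step in the interior case: turning the visibility intuition into a rigorous existence proof for the pair $(u,v)$. The argument must control how edges of $P'$ can hide portions of $P'$ from $r$, and must ensure that at least one edge of $P'$ has both endpoints simultaneously visible from $r$ and spans an empty triangle with $r$. A clean way I would attempt is to take the edge of $P'$ closest to $r$ within the outside face: both its endpoints are visible from $r$, no other vertex of $P'$ can lie inside the triangle it spans with $r$ (else a strictly closer edge would exist), and the two visibility segments $ur$ and $rv$ are unobstructed, yielding the required valid insertion.
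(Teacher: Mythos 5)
Your first two cases (an endpoint on $\pch(S)$) are correct: peeling off a hull neighbor $r$ of $s$ and prepending the hull edge $sr$ is sound, since the supporting line $\overline{sr}$ confines the rest of the path to one closed half-plane. The genuine gap is in the doubly-interior case, specifically in the insertion step, and your proposed fix does not work. Choosing the edge $uv$ of $P'$ \emph{closest} to $r$ only guarantees that no edge of $P'$ enters the open disk around $r$ of radius $d=\mathrm{dist}(r,uv)$; it does not guarantee that the triangle $ruv$ is empty or that $ru$ and $rv$ are unobstructed. The triangle $ruv$ typically contains points at distance greater than $d$ from $r$ (the regions near the corners $u$ and $v$), and a vertex of $S\setminus\{r\}$ can sit there with its incident edges crossing the segment $ru$ at a point farther than $d$ from $r$ while staying entirely outside the disk. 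For a concrete instance, take $r=(0,5)$, $u=(-3,0)$, $v=(3,0)$ and the path $y_0\circ y_0x_0\circ x_0\circ x_0u\circ u\circ uv\circ v$ with $x_0=(-2.8,0.1)$ and $y_0=(-4,0.5)$: here $uv$ is the closest edge to $r$ (distance $5$), yet $x_0$ lies inside the triangle $ruv$ and the edge $x_0y_0$ crosses $ru$, so $u$ is not visible from $r$ and the insertion at $uv$ fails. Your parenthetical justification ``else a strictly closer edge would exist'' is therefore false. The underlying existence claim --- some edge of $P'$ has both endpoints visible from $r$ when $r$ lies outside $\ch(S\setminus\{r\})$ --- is in fact true (in the example above the edge $x_0y_0$ works), but proving it requires a genuine argument, e.g.\ an angular sweep around $r$ analyzing the first-hit (upper-envelope) structure of $P'$ and chasing the breakpoints from one extreme visible vertex to the other; it is not a one-line consequence of a closest-edge choice.

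It is worth contrasting this with the paper's proof, which avoids insertion altogether. For $s,t\in\pch(S)$ it orders all points by the rotation of the lines $\overline{xy}$, $y\in S$, around the intersection point $x$ of two supporting lines at $s$ and $t$, which immediately yields a plane $s$--$t$ path; for the interior case it splits $S$ along the line $\overline{st}$, which meets the relative interior of a hull edge $ab$, and concatenates two boundary-case paths $P_A$ (from $s$ to $a$) and $P_B$ (from $b$ to $t$) via $ab$. If you want to keep your inductive scheme, you must either supply the sweep argument for the insertion lemma or replace the interior case by the paper's splitting trick.
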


\begin{proof} Suppose first that both $s$ and $t$ lie on $\pch(S)$; see \cref{fig:pq-path-boundary}.
Let $\ell_s$ ($\ell_t$) be a supporting line of $S$ passing through $s$ (through $t$) and let $x$ be the crossing point of $\ell_s$ and $\ell_t$ (since the points of $S$ are in general position, we may assume without loss of generality that $\ell_s$ and $\ell_t$ are not parallel and that $S\cup\{x\}$ is in general position). %
Consider the lines $\overline{xy}$, for $y\in S$, and order them $\ell_1, \ell_2, \ldots, \ell_{|S|}$ as they form a rotation scheme around $x$ from $\ell_1=\ell_s$ to $\ell_{|S|}=\ell_t$. Rename the points of $S$ as $y_i\in \ell_i$, $i=1,2,\ldots,|S|$. Then $s{=}y_1 \circ y_1y_2 \circ y_2 \circ \ldots \circ  y_{|S|}{=}t$ is a plane spanning path starting at $s$ and ending at $t$.

	Now assume that at least one of $s,t$ is an interior point of $\ch(S)$, say $s$;
see \cref{fig:pq-path-inside}. The line $\overline{st}$ separates $S\setminus\{s,t\}$ into two disjoint nonempty sets $A,B$. %
Also, this line intersects the relative interior of an edge of $\pch(S)$, say $ab$ with $a\in A$ and $b\in B$. %
Now both $s$ and $a$ lie on $\pch(A\cup \{s\})$, and the previously proven case implies existence of a plane spanning path $P_A$ in $A\cup\{s\}$ which starts in $s$ and ends in $a$. Similarly, $B\cup\{t\}$ contains a plane spanning path $P_B$ which starts in $b$ and ends in $t$. Then $P_A \circ ab \circ P_B$ is the desired path.
\end{proof}

\begin{figure}[t]
\centering
\subcaptionbox{$s,t\in\pch(S)$\label{fig:pq-path-boundary}}{\includegraphics[page=1]{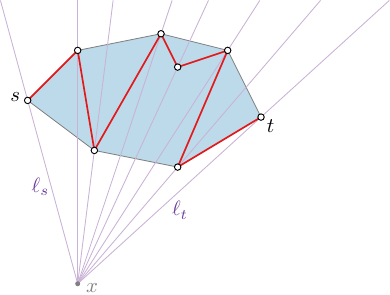}}
\hfill
\subcaptionbox{$s,t\notin\pch(S)$\label{fig:pq-path-inside}}{\includegraphics[page=2]{pq-path.pdf}}
\caption{Illustration for the proof of \cref{lem:pq-path}.}
\label{fig:pq-path}
\end{figure}

The following result will be used in the proof of \cref{thm:main}, but it also provides a strengthening of the result by Aichholzer et al.~\cite{ahkklpsw-ppstpcgg-17} on the existence of two edge-disjoint plane straight-line spanning paths in a point set. %

\begin{theorem}
\label{thm:pq-paths}
Let $|S|\ge 5$ and let $s$ and $t$ be two (not necessarily distinct) points of $\partial \ch(S)$. Then $S$ contains two edge-disjoint plane spanning paths, one starting at $s$ and the other one at $t$. Moreover, if the points $s$ and $t$ are distinct, then the paths can be chosen so that none of them contains the edge ${st}$.
\end{theorem}
\begin{proof}
We distinguish between two cases, $s\neq t$ and $s=t$.

\ccase{c:unequal}\textit{$s\neq t$.} %

\subcase{sc:odd-not-halfing}\textit{$|S|$ is odd or $\overline{st}$ is not a balancing line; see \cref{fig:pq-paths-odd}.} Let $\ell$ be a balancing line passing through $s$ and no other point of $S$, chosen so that $t$ belongs to the smaller part in case $|S|$ is even. This line defines a \bsp $S\setminus\{s\}=(S_1, S_2)$. If $|S|$ is odd, the choice of the partition is unique. Without loss of generality assume that $t\in S_2$.
Since $|S|\ge 5$, we have $|S_1|\ge |S_2|\ge 2$.

Let $s_0t_0$ be the edge of $\pch(S_1\cup S_2)$ which intersects $\ell$ and is seen from the point $s$, with $s_0\in S_1$ and $t_0\in S_2$.
Let $Z$ be the zig-zag path for $S_1\cup S_2$ starting in point $s_0$. Set $P=s \circ ss_0 \circ Z$. Since the zig-zag path $Z$ lies inside $\ch(S_1\cup S_2)$ and $ss_0$ is outside it, $P$ is a non-crossing path, and it visits all points of $S$.

Let $s_1\neq s_0$ be a point on $\pch(S_1)$ which is seen from $s$ (since $|S_1|\ge 2$ and $\{s\}\cup S_1$ is in general position, it follows from \cref{lem:seestwo} that $s$ sees at least two points of $\pch(S_1)$, and thus $s_1$ exists). Similarly, let $t_1\neq t$ be a point on $\pch(S_2)$ which is seen from $s$. Let $Q_1$ be a plane spanning path for $S_2$ that starts in $t$ and ends in $t_1$ (its existence is guaranteed by \cref{lem:pq-path}), and let $P_1$ be a plane spanning path for $S_1$ starting in $s_1$ (again, such a path exists because of \cref{lem:pq-path}, we could even prescribe its ending point, but we do not bother). Set $Q= Q_1 \circ t_1s \circ s \circ ss_1 \circ P_1$. This path is plane and it visits all points of $S$.

The paths $P$ and $Q$ are edge-disjoint, since we made sure that they use different edges incident with $s$, and among the remaining edges, $P$ uses only edges with one end-point in $S_1$ and the other one in $S_2$, while $Q$ uses only edges with both end-points in $S_1$, or both in $S_2$. Neither path uses the edge $st$.

\begin{figure}[t]
\centering
\subcaptionbox{\cref{sc:odd-not-halfing}\label{fig:pq-paths-odd}}{\includegraphics[page=1]{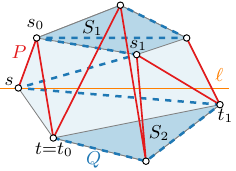}}
\hfil
\subcaptionbox{\cref{ssc:at-least-three}\label{fig:pq-paths-even-3}}{\includegraphics[page=3]{pq-paths.pdf}}
\hfil
\subcaptionbox{\cref{ssc:exactly-two}\label{fig:pq-paths-even}}{\includegraphics[page=2]{pq-paths.pdf}}
\caption{Illustration for \cref{sc:odd-not-halfing,ssc:at-least-three,ssc:exactly-two} of the proof of \cref{thm:pq-paths}.}
\label{fig:pq-paths}
\end{figure}

\subcase{sc:even-halfing}\textit{$|S|$ is even and $\overline{st}$ is a balancing line.} Then $\ell=\overline{st}$ defines a \bsp $S\setminus\{s,t\}=(S_1,S_2)$. For the illustrative figures, suppose that $\ell$ is horizontal, $s$ is the leftmost point of $\ch(S)\cap\ell$ and $S_1$ is above and $S_2$ below $\ell$. Since $t$ is also on $\pch(S)$, $t$ is the rightmost point of $\ch(S)\cap\ell$. Let $s_0t_0$ be the edge of $\ch(S_1\cup S_2)$ which intersects $\ell$ and whose intersection with $\ell$ is leftmost possible. Suppose without loss of generality that $s_0\in S_1$ and $t_0\in S_2$.

\subsubcase{ssc:at-least-three}\textit{$|S_2(s)\cup S_2(t)|\ge 3$.}
Let $Z$ be the zig-zag path for $S_1\cup S_2$ starting in point $s_0$. Denote its vertices by $z_0,z_1,\ldots,z_h$, with $z_0=s_0,z_2,\ldots,z_{h-1}\in S_1$, $z_1,z_3,\ldots,z_h\in S_2$, and the crossing points of $Z$ with the line $\ell$ by $x_1, x_2, \ldots, x_h$, with $x_i=z_{i-1}z_i\cap\ell$ for $i=1,2,\ldots,h$. We know that the crossing points $x_1,\ldots, x_h$ are ordered from left to right. Since $t$ is on $\pch(S)$, it lies to the right of $x_h$. The interior of the triangle $x_hz_ht$ does not contain any point of $S$, and hence the edge $z_ht$ does not intersect any edge of the zig-zag path $Z$. Therefore the path $P=s \circ ss_0 \circ Z \circ z_ht \circ t$ is a plane spanning path of~$S$.

The assumption $|S_2(s)\cup S_2(t)|\ge 3$ implies, via \cref{lem:bad2case}, that there exist $b\neq z_h, b\in S_2(t)$ and $a\in S_2(s), a\neq b$ such that $sa$ and $bt$ are non-crossing. \cref{lem:seestwo} implies that $s$ sees a point $c\neq s_0$ on $\pch(S_1)$. Now let $Q_2$ be a plane spanning path in $S_2$ starting at $b$ and ending at $a$ (guaranteed by \cref{lem:pq-path}) and $Q_1$ be a plane spanning path in $S_1$ starting at $c$. Then $Q=t \circ tb \circ Q_2 \circ as \circ s \circ sc \circ Q_1$ is a plane spanning path for $S$, and $P$ and $Q$ are edge-disjoint.

\subsubcase{ssc:other-at-least-three}\textit{$|S_1(s)\cup S_1(t)|\ge 3$.}
This case is symmetric to \cref{ssc:at-least-three}, we just start the zig-zag path $Z$ in $t_0$.

\subsubcase{ssc:exactly-two}\textit{$|S_1(s)\cup S_1(t)| = |S_2(s)\cup S_2(t)| = 2$.}
Consider again the path $P=s \circ ss_0 \circ s_0 \circ Z \circ z_h \circ z_ht \circ t$ as in \cref{ssc:at-least-three} and note that, due to the assumption that both $s$ and $t$ see the same two points on $\pch(S_1)$ (and $\pch(S_2)$), it follows that $z_1=t_0$ and that $z_{h-1}z_h$ is an edge of $\pch(S_1\cup S_2)$. Further, the edge $z_0z_h$ does not belong to $Z$ (since $h\ge 3$, and thus $z_h\neq z_1$). Let $Q_1$ be a plane spanning path for $S_1$ starting at $z_{h-1}$ and ending at $s_0$, and let $Q_2$ be a plane spanning path for $S_2$ starting at $z_h$ and ending at $t_0$ (the existence of such paths follows from \cref{lem:pq-path}). Then $Q=t \circ tz_{h-1} \circ Q_1 \circ s_0z_h \circ Q_2 \circ t_0s \circ s$ is a plane spanning path which is edge-disjoint with $P$. None of $P$ and $Q$ uses the edge $st$.

\ccase{c:equal}{$s=t$.}  Let $S'=S\setminus\{s\}$. Then $|S'|\ge 4$ and \cref{lem:seestwo} implies that $|S'(s)|\ge 2$. Let $a\neq b$ be two consecutive points on $\pch(S')$ which are seen by $s$. If $|S|\ge 6$, then $|S'|\ge 5$ and, by the already
proven \cref{c:unequal}, $S'$ contains two  edge-disjoint plane spanning paths $P'$ (starting at $a$) and $Q'$ (starting at $b$). It follows that $P=s \circ sa \circ P'$ and $Q=t \circ tb \circ Q'$ are edge-disjoint plane spanning paths for $S$ both starting in $s=t$; see \cref{fig:caseanalysis-ge6}.

For $|S|=5$, we have $|S'|=4$. It is easy to see that for any two consecutive points on %
$\pch(S')$, there exist two edge-disjoint plane spanning paths starting in these points; see \cref{fig:caseanalysis-4}. The paths starting at point $s$ are then constructed in the same way as in the case of $|S|\ge 6$.
\end{proof}

\begin{figure}[t]
\centering
\subcaptionbox{$|S|\ge 6$\label{fig:caseanalysis-ge6}}{\includegraphics[page=1]{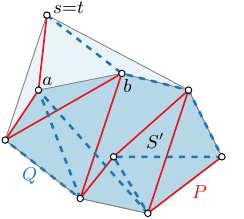}}
\hfil
\subcaptionbox{$|S|=5$\label{fig:caseanalysis-4}}{\includegraphics[page=2]{pq-caseanalysis.pdf}}
\caption{Illustration for \cref{c:equal} of the proof of \cref{thm:pq-paths}. }
\label{fig:caseanalysis}
\end{figure}

\section{Three edge-disjoint plane spanning paths}\label{sec:Main}

We first introduce a few technical notions. For two points $u,v$ in the plane, we denote by $(uv)^+$ the open halfplane to the right of the line
$\overline{uv}$, if the line is traversed in the way that $u$ precedes $v$.
The opposite open halfplane is denoted by $(uv)^-$.
 Note that $(uv)^-=(vu)^+$.
Let $Q$ be a convex polygon and let $u,v$ be two adjacent vertices of $Q$ such that
$Q\subseteq (uv)^+$.
Then we say that $v$ is the \emph{clockwise neighbor} of $u$ along (the boundary of) $Q$ and $u$ is the \emph{counterclockwise neighbor} of $v$ along (the boundary of) $Q$. We shall omit the words ``the boundary of'' when talking about the (counter-)clockwise neighbor along $Q$.
Let $(S_1,S_2)$ be a \bsp of $S$. The \emph{visibility graph} of the partition is
\[\vis(S_1,S_2)=(S,\{ab:a\in S_1, b\in S_2, {ab}\cap (\ch(S_1)\cup \ch(S_2))=\{a,b\} \}),\]
i.e., $ab\in E(\vis(S_1,S_2))$ if and only if $a\in S_1(b)$ and $b\in S_2(a)$; see \cref{fig:visibility}.
A path $v_1\circ v_1v_2\circ  \ldots \circ v_k$ of $\vis(S_1,S_2)$ is called \emph{switchable} if its edges $v_1v_2, v_2v_3, \ldots, v_{k-1}v_k$ cross the separating line of the partition $(S_1,S_2)$ in this order and for each $i=1,2,\ldots,k-2$, the interior of the triangle $v_i v_{i+1} v_{i+2}$ contains no point of $S$. Observe that every switchable path is non-crossing. We will show in Lemma~\ref{lem:bigN} that if a zig-zag path $Z$ contains a switchable path $a\circ ab \circ b \circ bc \circ c \circ cd \circ d$ of length 3 as a subpath, it can be modified to a plane path $\ldots \circ a \circ ac \circ c \circ cb \circ b \circ bd \circ d \ldots$ which allows spanning paths in the two classes of the \bsp\ to be concatenated via the edges $ab$ and $cd$.%

\begin{figure}[t]
\centering
{\includegraphics{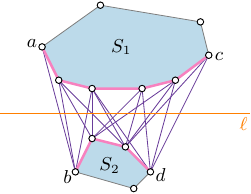}}
\caption{Illustration for the definition of the visibility  graph of a \bsp. The edges of the visibility graph are drawn purple.}
\label{fig:visibility}
\end{figure}

The \emph{$n$-wheel configuration} $W_n$ of points  (in general position) is a set of $n-1$ points in convex position, augmented with one point lying inside the convex hull of these $n-1$ points in such a position that every line that passes through the augmenting point and any other point is a balancing line of $W_n$. This point configuration plays an important role in the proof below, and we need to show that it contains three edge-disjoint plane spanning paths by an ad hoc construction, at least for the case of $n$ even. This has already been sketched by Aichholzer et al,~\cite{ahkklpsw-ppstpcgg-17}. %

\begin{proposition}
\label{prop:even-wheel}
For even $n\geq6$, the maximum number of edge-disjoint plane spanning paths in the wheel configuration $W_n$ is $\frac{n}{2}-1$.
\end{proposition}

\begin{figure}[b]
\centering
\subcaptionbox{$P_1$\label{fig:evenwheel-1}}{\includegraphics[page=6]{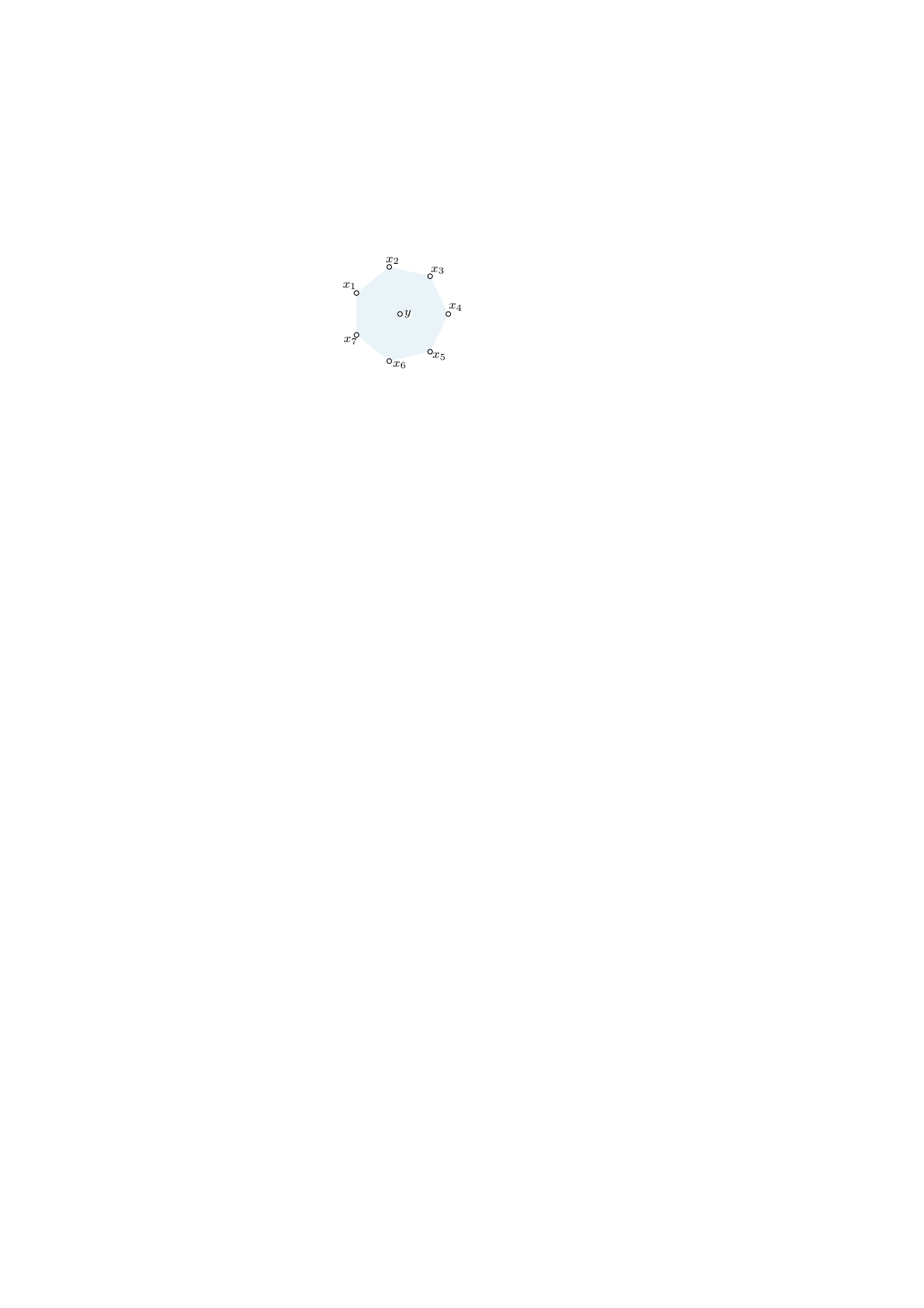}}
\hfil
\subcaptionbox{$P_2$\label{fig:evenwheel-2}}{\includegraphics[page=7]{evenwheel}}
\hfil
\subcaptionbox{$P_3$\label{fig:evenwheel-3}}{\includegraphics[page=8]{evenwheel}}
\hfil
\subcaptionbox{All three paths\label{fig:evenwheel-4}}{\includegraphics[page=9]{evenwheel}}
\caption{Illustration for the proof of \cref{prop:even-wheel} for $W_8$.}
\label{fig:evenwheel}
\end{figure}

\begin{proof}
Let $y$ be the augmenting point lying inside $\ch(W_n)$ and let $x_1,x_2,
\ldots,x_{n-1}$ be the points on $\pch(W_n)$ listed in the
clockwise order; see \cref{fig:evenwheel}. First construct $\frac{n}{2}-1$ plane paths
$P'_k = \ldots %
        \circ x_{k-1}
        \circ x_{k-1}x_{k+\frac{n}{2}} \circ x_{k+\frac{n}{2}}
        \circ x_{k+\frac{n}{2}}x_k \circ x_k
        \circ x_kx_{k+\frac{n}{2}-1} \circ x_{k+\frac{n}{2}-1}
        \circ x_{k+\frac{n}{2}-1}x_{k+1} \circ x_{k+1}
        \circ \ldots$
for $k=1,2,\ldots,\frac{n}{2}-1$, with counting in subscripts modulo $n-1$. Each of
the paths is non-crossing and spans the points $W_n\setminus\{y\}$. They
are pairwise edge-disjoint, because for every $k$, if $x_i$ and $x_j$ are
consecutive on $P'_k$, then $i+j=2k+\frac{n}{2}-1$ or $2k+\frac{n}{2}$. Then,
again for each $k$, create a path $P_k$ from $P'_k$ by replacing the edge
$x_kx_{k+\frac{n}{2}-1}$ by the path  $ x_ky \circ y \circ yx_{k+\frac{n}{2}-1}$.
Again, the edges of these subpaths are private for them, and thus the paths
$P_k, k=1,2,\ldots,\frac{n}{2}-1$ are edge-disjoint. And each of them is
non-crossing, because for each $k$, $\overline{x_ky}$ is a balancing line. And they
clearly span all points of $W_n$.

The fact that $W_n$ does not admit $\frac{n}{2}$ edge-disjoint
plane spanning paths for $n\geq 6$ is implied by the work of Biniaz et al.~\cite[Theorem 4]{DBLP:journals/dmtcs/BiniazBMS15} who prove that the maximum number of edge-disjoint perfect matchings of $W_n$ is $\frac{n}{2} -1$.
\end{proof}

Our later proof of \cref{thm:main} is based on the following structural result. 
Recall that a vertex is \emph{bridged} if it is incident to one of the two edges
of $\ch(S)$ that is between endpoints of $S_1$ and $S_2$.

\begin{theorem}\label{thm:crucial}
Let $S$ be a set of $n\ge 5$ points in general position in the plane. Then at least one of the following holds true
\begin{enumerate}
\item\label[condition]{thm:crucial-1} $S$ has a \bsp $(S_1,S_2)$ such that $\vis(S_1,S_2)$ contains two crossing edges, or
\item\label[condition]{thm:crucial-2} $S$ has a \bsp $(S_1,S_2)$ such that $\vis(S_1,S_2)$ contains a switchable path of length 3 and a bridged vertex not included in the path which is  incident with at
least 2 edges of $\vis(S_1,S_2)$, or
\item\label[condition]{thm:crucial-3} $n$ is even and $S$ is the wheel configuration $W_n$.
\end{enumerate}
\end{theorem}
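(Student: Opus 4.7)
The plan is to argue by contradiction: assume that no \bsp of $S$ satisfies \cref{thm:crucial-1} or \cref{thm:crucial-2}, and deduce that $n$ must be even and $S$ must be the wheel configuration $W_n$. First I would fix an arbitrary \bsp $(S_1,S_2)$ with bridges $p_1p_2$ and $q_1q_2$. These bridges, together with the two \emph{facing chains} of $\pch(S_1)$ and $\pch(S_2)$ (the portions between the bridged vertices on the gap side of the separating line), bound a topological quadrilateral $R$ outside $\ch(S_1)\cup\ch(S_2)$. A key structural observation is that the edges of $\vis(S_1,S_2)$ are precisely the chords of $R$ joining one facing chain to the other, and two such chords cross iff their endpoints occur in opposite orders along the chains. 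Since \cref{thm:crucial-1} fails, visibility edges are pairwise non-crossing and, together with $\partial R$, form a triangulation of $R$.

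The next step is to read switchable paths off this triangulation. A length-$3$ subpath $a\circ b\circ c\circ d$ of $\vis(S_1,S_2)$ is switchable exactly when the triangles $abc$ and $bcd$ are empty of $S$-points and their edges cross the separating line in order — combinatorially, three consecutive visibility diagonals bounding two adjacent empty triangles of the triangulation of $R$. A case analysis on the sizes $(k_1,k_2)$ of the two facing chains shows that, unless $(k_1,k_2)$ falls into a short list of ``sparse'' exceptions (chains with essentially no non-bridged extreme points), the triangulation of $R$ always contains such a switchable path together with an off-path bridged vertex of $\vis(S_1,S_2)$-degree at least $2$, which triggers \cref{thm:crucial-2}. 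So under our assumption every \bsp of $S$ must be sparse in this sense.

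Third, I would propagate this local sparsity into a global rigidity statement. Using \cref{lem:seestwo} and the fact that every point of $S$ lies on at least one balancing line, one can rotate a balancing line around a chosen pivot point of $S$ and track how the \bsp, its facing chains, and its visibility edges evolve. The only way for the sparseness condition to persist under every such rotation around every pivot is that $S$ contains a single non-extreme point $c$ such that every line through $c$ and another point of $S$ is a balancing line — which is exactly the defining property of the wheel $W_n$. A final parity check on the sizes of the two halves of a \bsp through $c$ rules out odd $n$. The main obstacle is precisely this last propagation step: converting the purely local ``every \bsp is sparse'' into the global ``$S$ is a wheel'' requires careful geometric bookkeeping of how facing chains and visibility structure change under continuous rotation of the balancing line, and separately handling the small-$n$ boundary (in particular $n=5$, where no wheel exists and where some sparse configuration must already provide either crossing visibility edges or a suitable switchable path by direct inspection).
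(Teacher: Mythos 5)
Your first two steps are a sound reformulation and, as far as they go, essentially correct: the region $R$ between $\ch(S_1)$ and $\ch(S_2)$ is a simple polygon on the two facing chains, every diagonal of $R$ joins the two chains (a segment between two non-adjacent vertices of the same chain is a chord of that hull and leaves $R$), and if \cref{thm:crucial-1} fails for $(S_1,S_2)$ then the visibility edges are exactly the diagonals of a triangulation of $R$, in which two adjacent triangles of opposite type yield a switchable path of length~3. However, your ``short list of sparse exceptions'' is not determined by the chain lengths $(k_1,k_2)$ alone. If one facing chain is a single (doubly bridged) vertex, $\vis(S_1,S_2)$ is a star with no crossings and no alternating path of length~3 even though the other chain may carry arbitrarily many non-bridged extreme points; and where a switchable path does exist, \cref{thm:crucial-2} also demands an off-path bridged vertex of visibility degree at least~2, which fails for instance when $k_1=k_2=2$ (all four chain vertices are bridged and lie on the path) and, more generally, depends on which diagonals the at most four bridged vertices carry. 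So the class of \bsp{}s you must rule out is genuinely larger and messier than your sketch acknowledges.

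The decisive gap is your third step, which is where essentially all of the content of the theorem lives: you assert, but do not argue, that if every \bsp{} of $S$ is ``sparse'' then $n$ is even and $S=W_n$. Two concrete problems beyond the missing bookkeeping: (i) the property ``there is a point $c$ such that every line through $c$ and another point of $S$ is a balancing line'' is \emph{not} the defining property of $W_n$ --- the definition also requires the other $n-1$ points to be in convex position, and pulling one rim point of a wheel radially towards the centre preserves every balancing line through the centre while destroying convex position, so even a completed rotation argument of the kind you describe would only identify a strictly larger class of configurations; (ii) odd $n$ is not eliminated by a ``final parity check,'' since for odd $n$ a line through $c$ and one further point still induces a legitimate \bsp{} with parts of sizes $(n-3)/2$ and $(n-1)/2$. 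The paper sidesteps the continuous-rotation difficulty with a discrete argument: for odd $n$ it shows \cref{thm:crucial-1} always holds using the two almost-balancing lines through a fixed extreme point; for even $n$ it takes the halving segment $uv$ at an extreme point $u$, produces a second halving segment with controlled empty double wedges (\cref{clm:halving}), derives \cref{thm:crucial-1} or \cref{thm:crucial-2} unless two specific hull vertices are consecutive on $\pch(S)$, and otherwise advances the chosen extreme point around $\pch(S)$, so that after $n/2$ steps failure of both conditions forces the wheel. As written, your proposal is incomplete at exactly this step, and you say so yourself.
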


\begin{proof}
{\bf Case 1: $n$ is odd}.
A line $\overline{xy}$ passing through two points $x,y\in S$ is an \emph{almost-balancing line} of $S$,
if exactly $(n-1)/2$ points of $S\setminus\{x,y\}$ lie on one of the sides of $\overline{xy}$ and
the remaining $(n-3)/2$ points of $S\setminus\{x,y\}$ lie on the other side of $\overline{xy}$.
We fix an extreme point $u$ of $S$. Let $\overline{ua}$, $\overline{ub}$ be the two almost-balancing lines passing through $u$; see \cref{fig:crucial-partition}.
Suppose $b\in (ua)^+$.
Note that the interior of the convex wedge bounded by the two rays emanating from $u$, one passing through $a$ and the other passing through $b$, contains no point of $S$.
Each of the lines $\overline{ua}$ and $\overline{ub}$ partitions the set $S\setminus\{u,a,b\}$ into two sets $A$ and $B$ of equal size
$(n-3)/2$, such that $A$ lies to the left of the lines $\overline{ua}$ and $\overline{ub}$ and $B$ lies to the right of them.
The line $\overline{ab}$ partitions $A$ into $A_1$ and $A_2$, and $B$ into $B_1$ and $B_2$, such that $A_1$ and $B_1$
lie in $({ab})^+$, and  $A_2$ and $B_2$ lie in $({ab})^-$.

\begin{figure}[t]
\centering
\subcaptionbox{$A_1,A_2,B_1,B_2$\label{fig:crucial-partition}}{\includegraphics[page=1]{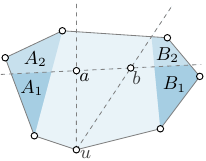}}
\hfill
\subcaptionbox{$A_1\neq\emptyset$\\ $\rightarrow$ \cref{thm:crucial-1}\label{fig:crucial-nonempty}}{\includegraphics[page=2]{crucial.pdf}}
\hfill
\subcaptionbox{$A_1=B_1=\emptyset$\\ $\rightarrow$ \cref{thm:crucial-1}\label{fig:crucial-empty}}{\includegraphics[page=3]{crucial.pdf}}
\caption{Illustration for Case~1 of the proof of \cref{thm:crucial}.}
\label{fig:crucial}
\end{figure}

Suppose first that $A_1\neq\emptyset$, and let $a_{\circlearrowright}$ be the clockwise neighbor of $a$ along $\ch(A\cup\{a\})$; see \cref{fig:crucial-nonempty}.
Consider the \bsp $(S_1,S_2)$ with $S_1=A\cup\{a\}$ and $S_2=B\cup\{b,u\}$ of $S$. Its visibility graph $\vis(S_1,S_2)$ contains the crossing
edges $au$ and $a_{\circlearrowright}b$, which %
proves the result in this case.

If $B_1\ne\emptyset$, then we can analogously find a crossing pair of edges in $\vis(A\cup\{a,u\},B\cup\{b\})$.
Thus, we may further assume that $A_1=B_1=\emptyset$. Then we have $|A_2|=|B_2|=(n-3)/2>0$.
Let $a_\circlearrowleft\in A_2$ be the counterclockwise neighbor of $a$ along $\ch(A\cup\{a\})$, and let $b_\circlearrowright$ be
the clockwise neighbor of $b$ along $\ch(B\cup\{u,b\})$. Then $a_\circlearrowleft b$, $ab_\circlearrowright$ is a crossing pair of edges
of $\vis(A\cup\{a\},B\cup\{b,u\})$. %

\smallskip\noindent
{\bf Case 2: $n$ is even}.
A line passing through two points $x,y\in S$ is a \emph{halving line} of $S$
if exactly $\frac{n-2}{2}$ points of $S$ lie on each of its two sides.
If $\overline{xy}$ is a halving line of $S$, where $x,y\in S$, then the segment $xy$ is called
a \emph{halving segment} of $S$.

\begin{claim}\label{clm:halving}
Let $uv$ be a halving segment of a set $S$ of $n$ points in general position in the plane such that $u\in\pch(S)$.
Then there is another halving segment $pq$ of $S$ such that the following three conditions hold:
\begin{enumerate*}[label=(\arabic*)]
\item an unbounded part of the ray emanating from $p$ and passing through $q$ lies in $(uv)^+$;
\item no point of $S\setminus\{u,v,p,q\}$ lies in the double-wedge $((uv)^+\cap (pq)^-)\cup((uv)^-\cap (pq)^+)$;
\item $p=v$, or the two open segments $uv$ and $pq$ cross.
\end{enumerate*}
\end{claim}

\begin{proof}
Suppose without loss of generality that $uv$ is a vertical line and $u$ lies below $v$; see \cref{fig:halving}. Let $X:=\ch(S\cap (uv)^-)$
and $Y:=\ch(S\cap (uv)^+)$. Let $x\in X$ and $y\in Y$ be the extreme points of $X$ and $Y$, respectively, such that
$\overline{xy}$ avoids the interiors of $X$ and $Y$, the set $X$ lies above $\overline{xy}$, and
$Y$ lies below $\overline{xy}$.

If the open segments $uv$ and $xy$ cross, then
$p=x$ and $q=y$ yields the %
claim; see \cref{fig:halving-cross}.
Otherwise, $xy$ crosses the line $\overline{uv}$ above $v$; see \cref{fig:halving-nocross}.
Let $p=v$ and $q\in Y$ be the extreme point of $Y$ seen from $p$ as the highest point of $Y$. (That is, all
points of $Y\setminus\{q\}$ lie below the line $\overline{pq}$.) Then $pq$ is a halving segment
and yields the claim.
\end{proof}

\begin{figure}[t]
\centering
\subcaptionbox{Segments $uv$ and $xy$ cross\label{fig:halving-cross}}[.47\textwidth]{\includegraphics[page=1]{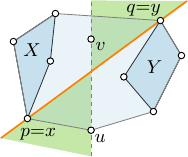}}
\hfill
\subcaptionbox{Segments $uv$ and $xy$ do not cross\label{fig:halving-nocross}}[.47\textwidth]{\includegraphics[page=2]{halving.pdf}}
\caption{Illustration for \cref{clm:halving}.}
\label{fig:halving}
\end{figure}

Fix an extreme point $u$ of $S$. Let $uv$ be the unique halving segment incident to $u$.
Let $A:=S\cap (uv)^-$ and $B:=S\cap (uv)^+$. Since $uv$ is a halving segment, we have $|A|=|B|=\frac{n-2}{2}$.
 Let $pq$ be the halving segment guaranteed by %
\cref{clm:halving}. %

If the halving segments $uv$ and $pq$  cross each other, then they both belong to
the visibility graph $\vis(A',B')$,
where $A':=A\cup\{v\}$ and $B':=B\cup\{u\}$, and \cref{thm:crucial-1} applies; see \cref{fig:crucial-cross}.
Thus we may assume that  $p=v$ and there are no points of $S$
inside the double wedge $((uv)^+\cap (vq)^-)\cup((uv)^-\cap (vq)^+)$.
By a mirror argument, we can also assume that there is a point $r\in A$ such that there are no points of $S$
inside the double wedge $((uv)^-\cap (vr)^+)\cup((uv)^+\cap (vr)^-)$.

The line $\overline{qr}$ partitions $A\setminus\{r\}$ ($B\setminus\{q\}$, respectively) into two sets
$A_1$ and $A_2$ ($B_1$ and $B_2$, respectively) such that $A_1$, $B_1$ lie in $(qr)^-$ and
$A_2$, $B_2$ lie in $(qr)^+$.

We now distinguish three subcases. Consider first the case $A_2\neq\emptyset$ and $B_2\neq\emptyset$; see \cref{fig:crucial-bothempty}.
Then the counterclockwise neighbor $r_\circlearrowleft $ of $r$ along $\ch(A\cup\{u\})$ lies in $A_2$.
Similarly, the clockwise neighbor $q_\circlearrowright$ of $q$ along $\ch(B\cup\{v\})$ lies in $B_2$.
It follows that the edges $qr_\circlearrowleft $ and $rq_\circlearrowright$ form a crossing pair in $\vis((A\cup\{u\}),(B\cup\{v\}))$, and \cref{thm:crucial-1} applies.

\begin{figure}[t]
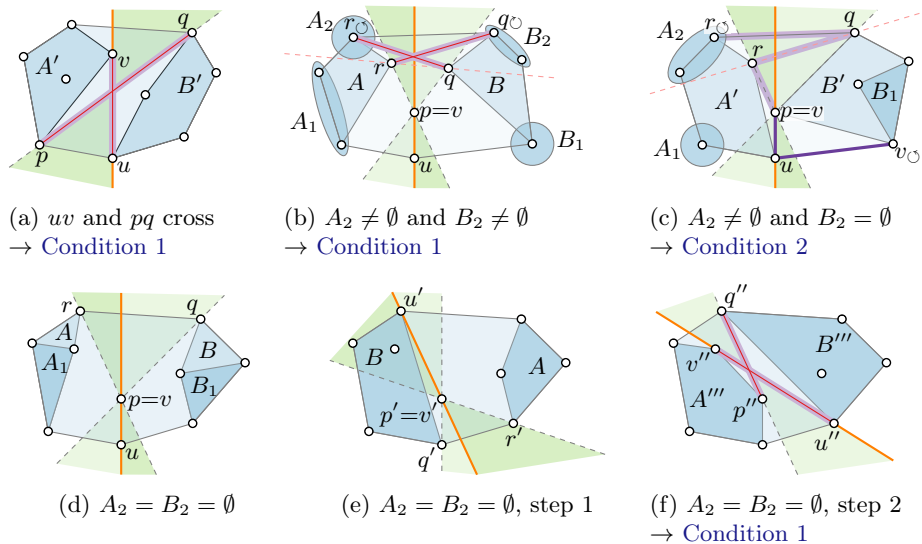

\centering
\subcaptionbox{$uv$ and $pq$ cross\\ $\rightarrow$ \cref{thm:crucial-1}\label{fig:crucial-cross}}{\includegraphics[page=4]{crucial.pdf}}
\hfill
\subcaptionbox{$A_2\neq \emptyset$ and $B_2\neq\emptyset$\\ $\rightarrow$ \cref{thm:crucial-1}\label{fig:crucial-bothempty}}{\includegraphics[page=5]{crucial.pdf}}
\hfill
\subcaptionbox{$A_2\neq\emptyset$ and $B_2=\emptyset$\\ $\rightarrow$ \cref{thm:crucial-2}\label{fig:crucial-oneempty}}{\includegraphics[page=6]{crucial.pdf}}

\medskip

\subcaptionbox{$A_2=B_2=\emptyset$\label{fig:crucial-bothempty-1}}{\includegraphics[page=7]{crucial.pdf}}
\hfill
\subcaptionbox{$A_2=B_2=\emptyset$, step 1\label{fig:crucial-bothempty-2}}{\includegraphics[page=8]{crucial.pdf}}
\hfill
\subcaptionbox{$A_2=B_2=\emptyset$, step~2\\ $\rightarrow$ \cref{thm:crucial-1}\label{fig:crucial-bothempty-3}}{\includegraphics[page=9]{crucial.pdf}}

\caption{Illustration for Case 2 of \cref{thm:crucial}.}
\label{fig:case2}
\end{figure}

Consider now that one of the sets $A_2$ and $B_2$ is empty and the other one is non-empty; see \cref{fig:crucial-oneempty}.
By symmetry, we may assume that $A_2\neq\emptyset$ and $B_2=\emptyset$.
Then $B_1\neq\emptyset$.
We consider the \bsp $(A',B')$, where $A':=A\cup\{u\}$ and $B':=B\cup\{v\}$. Let $r_\circlearrowleft $ be the counterclockwise neighbor of $r$
along $\ch(A')$. Since $A_2\neq\emptyset$, $r_\circlearrowleft $ lies in $A_2$.
Then (1) $u$ is a bridged vertex for the partition $(A',B')$, (2)
$u$ is incident with at least 2 edges of $\vis(A',B')$ -- the edge $uv$ and the
edge $uv_{\circlearrowleft}$ for the counter-clockwise neighbor $v_{\circlearrowleft}$ of $v$ along
$\ch(B')$, and
(3) %
 $v\circ vr\circ r\circ rq\circ q\circ qr_\circlearrowleft \circ r_\circlearrowleft $ is a switchable path in $\vis(A',B')$, and \cref{thm:crucial-2} applies.

Finally, consider that  $A_2=\emptyset$ and $B_2=\emptyset$; see \cref{fig:crucial-bothempty-1}. Then $q$ and $r$ are neighbors along $\ch(S)$, and we again consider the whole analysis which started with fixing an extreme point of $S$ but
now we fix the point $u':=r$ instead of $u$; see \cref{fig:crucial-bothempty-2}. Either we find a \bsp satisfying  \cref{thm:crucial-1} or \cref{thm:crucial-2},
or we find two neighbors $q'$ and $r'$ along $\ch(S)$. In the first case we are done.
In the latter case, the point $q'$ is actually equal to $u$ and it is clockwise of $r'$
along $\ch(S)$. 

We then again consider the analysis which started with fixing an extreme point $S$, but
now we fix the point $u'':=r'$ instead of $u$; see \cref{fig:crucial-bothempty-3}. 
We continue with this process, fixing point $u^{(k)}=r^{(k-1)}$ in step $k$, until
we find a \bsp satisfying condition 1) or 2) of the theorem.
Note that, in this procedure, the line $uv$ keeps rotating clockwise through the convex hull of $S$. As long as there are at least two points inside the convex hull, we are guaranteed to find one of the above cases at some point: it can only rotate infinitely if we have $v^{(k)}=v^{(k-1)}$ and $p^{(k)}=p^{(k-1)}$ at every step. 
After $n/2$ repetitions of our procedure, the line $uv$ will have made a $180^\circ$ rotation, so it must have passed another interior point somewhere, which would change $v$; unless $S$ is the wheel configuration $W_n$. %
\end{proof}

Let $Z$ be a zig-zag path for a partition $(S_1,S_2)$ of $S$. For two vertices $a,b\in S$ , we call the segment $ab$ a \emph{free edge} (with respect to $Z$) if $ab\in E(\vis(S_1,S_2))$ and $ab\not\in E(Z)$.

\begin{lemma}\label{lem:2freeedges}
Let $|S|\ge 10$ and let $Z$ be a zig-zag path for a \bsp {} $(S_1,S_2)$ of $S$ which leaves at least two free edges. Then $S$ allows three edge-disjoint plane spanning paths.
\end{lemma}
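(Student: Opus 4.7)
The plan is to take $Z$ itself as the first of the three desired paths and to build the other two by the concatenation scheme outlined in \cref{sec:Overview}. The key observation driving everything is that every edge of a zig-zag path has one endpoint in $S_1$ and the other in $S_2$; hence any path whose edges lie entirely inside $\ch(S_1)$ or entirely inside $\ch(S_2)$ is automatically edge-disjoint from~$Z$.

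Let $e=p_1p_2$ and $f=q_1q_2$ (with $p_i,q_i\in S_i$) be two free edges guaranteed by the hypothesis. The definition of $\vis(S_1,S_2)$ forces $p_1,q_1\in\pch(S_1)$ and $p_2,q_2\in\pch(S_2)$: otherwise a positive-length initial portion of the segment $p_1p_2$ would lie inside $\ch(S_1)$ (or $\ch(S_2)$), contradicting $e\in E(\vis(S_1,S_2))$. Because $(S_1,S_2)$ is a \bsp\ of a set of at least ten points, both $|S_1|$ and $|S_2|$ are at least~$5$, so \cref{thm:pq-paths} is applicable inside each class.

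In the main case $p_1\neq q_1$ and $p_2\neq q_2$ I would apply \cref{thm:pq-paths} inside $S_1$ with the prescribed pair $(p_1,q_1)$ and, after reversing the outputs, obtain two edge-disjoint plane spanning paths $P_1,Q_1$ of $S_1$ ending at $p_1$ and $q_1$, respectively; analogously I obtain plane spanning paths $P_2,Q_2$ of $S_2$ starting at $p_2$ and $q_2$. Setting
\[
P := P_1\circ p_1p_2\circ P_2 \qquad\text{and}\qquad Q := Q_1\circ q_1q_2\circ Q_2,
\]
I get two plane spanning paths of $S$, because the $P_i,Q_i$ are plane inside $\ch(S_i)$ and each bridging visibility edge has empty interior intersection with $\ch(S_1)\cup\ch(S_2)$. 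The triple $Z,P,Q$ is then pairwise edge-disjoint: $P_i$ and $Q_i$ are edge-disjoint inside $S_i$ by construction; the bridging edges $e\neq f$ are both free, hence absent from $Z$; and no edge of $P_i$ or $Q_i$ can appear in $Z$ because zig-zag edges always cross the separating line.

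The remaining case is that the two free edges share an endpoint. Since $e\neq f$, they cannot share endpoints on both sides, so after relabelling the common vertex is $v:=p_1=q_1\in S_1$ with $p_2\neq q_2$. I would then invoke \cref{thm:pq-paths} in $S_1$ with $s=t=v$, producing two edge-disjoint plane spanning paths of $S_1$ both ending at $v$ (after reversal), and handle $S_2$ as in the previous case. The same concatenation works; the edges of $P_1$ and $Q_1$ incident to $v$ are distinct by edge-disjointness inside $S_1$, and the bridging edges $vp_2$ and $vq_2$ are distinct because $p_2\neq q_2$. The only delicate point in the whole argument is verifying the hypothesis of \cref{thm:pq-paths}---namely that the prescribed endpoints actually lie on the respective convex-hull boundaries---which is forced directly by the visibility definition; all other verifications are routine bookkeeping, and the subcase where two free edges coincide on one side is absorbed neatly by the $s=t$ clause of \cref{thm:pq-paths}.
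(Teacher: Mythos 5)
Your proposal is correct and follows essentially the same route as the paper's own proof: both identify the endpoints of the two free edges as convex-hull points of $S_1$ and $S_2$, invoke \cref{thm:pq-paths} (whose $s=t$ clause absorbs the shared-endpoint case, just as the paper notes that $a=c$ or $b=d$ may occur but not both) to get prescribed-start edge-disjoint plane spanning paths inside each class, and concatenate across the free visibility edges, with edge-disjointness from $Z$ following because every zig-zag edge crosses the separating line. The only cosmetic difference is that you spell out the hull-membership and case split explicitly where the paper asserts them.
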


\begin{figure}[t]
\centering
\subcaptionbox{$Z$ with two free edges $ab, cd$\label{fig:2freeedges-Z}}{\includegraphics[page=1]{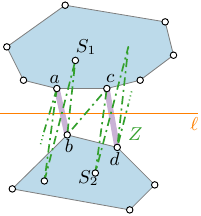}}
\hfill
\subcaptionbox{$P$ and $Q$\label{fig:2freeedges-PQ}}{\includegraphics[page=2]{2freeedges}}
\hfill
\subcaptionbox{The three paths\label{fig:2freeedges-all3}}{\includegraphics[page=3]{2freeedges}}
\caption{Illustration for the proof of \cref{lem:2freeedges}}
\label{fig:2freeedges}
\end{figure}

\begin{proof}
Let $ab$ and $cd$ be two free edges with respect to a zig-zag path $Z$ and a
\bsp $(S_1,S_2)$; see \cref{fig:2freeedges-Z}. Since $|S|\ge 10$, we have $|S_1|\ge 5$ and $|S_2|\ge 5$.
Suppose $a,c\in \pch(S_1)$ and $b,d\in \pch(S_2)$. We may have $a=c$ or $b=d$, but not
both.
Let $P_1$ and $Q_1$ be edge-disjoint plane
spanning paths for $S_1$, with $P_1$ starting at $a$ and $Q_1$ starting at $c$.
Similarly, let $P_2$ and $Q_2$ be edge-disjoint plane spanning paths for
$S_2$, with $P_2$ starting at $b$ and $Q_2$ starting at $d$. The existence of
such paths is guaranteed by \cref{thm:pq-paths}. Then $
P=P_2^{-1} \circ ba \circ P_1$ and $Q=Q_2^{-1} \circ dc \circ Q_1$ are edge-disjoint plane
spanning paths for $S$; see \cref{fig:2freeedges-PQ}. Each of them is plane because the edge $ab$
($cd$, respectively) contains no point in the interior of $\ch(S_1)$ (of
$\ch(S_2)$, respectively). Both of them are edge-disjoint with $Z$, because the only
two edges of them that are incident with vertices from both $S_1$ and $S_2$
are $ab$ and $cd$, and these are by assumption free w.r.t.\ $Z$; see \cref{fig:2freeedges-all3}.
\end{proof}

The following lemmas can be proved similarly with \cref{lem:seestwo,lem:claimbad2case}.

\begin{lemma}
\label{lem:bigN}
Let $|S|\ge 10$ and let $Z$ be a zig-zag path for a \bsp $(S_1,S_2)$ of $S$ which contains all three edges of a switchable path of length~3 in $\vis(S_1,S_2)$. Then $S$ contains three edge-disjoint plane spanning paths.
\end{lemma}

\begin{figure}[t]
\centering
\subcaptionbox{$Z$\label{fig:bigN-Z}}{\includegraphics[page=1]{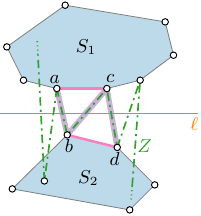}}
\hfill
\subcaptionbox{$Z'$\label{fig:bigN-Z'}}{\includegraphics[page=2]{bigN}}
\hfill
\subcaptionbox{$P$ and $Q$\label{fig:bigN-PQ}}{\includegraphics[page=3]{bigN}}
\caption{Illustration for the proof of \cref{lem:bigN}}
\label{fig:bigN}
\end{figure}

\begin{proof}
Note that $|S_i|\ge 5$ for $i=1,2$.
Let $ab,bc,cd$ be the edges of a switchable path $P\subseteq Z$, and let they occur on $Z$ in this order, i.e., $Z=Z_1 \circ ab \circ b \circ bc \circ c \circ cd \circ Z_2$. Since $a,c$ ($b,d$) are neighbors along $\pch(S_1)$ ($\pch(S_2)$, respectively), the  edges $ac$ and $bd$ are not crossed by any edge of $Z$, and hence $Z'=Z_1 \circ ac \circ c \circ cb \circ b \circ bd \circ Z_2$ is a plane spanning path for $S$; see \cref{fig:bigN-Z'}. Now \cref{thm:pq-paths} implies that there exist edge-disjoint plane spanning paths  $P_1, Q_1$ of $S_1$, $P_1$ starting at $a$ and $Q_1$  starting at $c$, none of them containing the edge $ac$.  Similarly, there exist  edge-disjoint plane spanning paths $P_2, Q_2$ for $S_2$, $P_2$ starting at $b$ and $Q_2$ starting at $d$, none of them containing the edge $bd$; see \cref{fig:bigN-PQ}. It follows that $P=P^{-1}_1 \circ ab \circ P_2$ and $Q=Q^{-1}_1 \circ cd \circ Q_2$ are  edge-disjoint plane spanning paths of $S$, and they are both edge-disjoint with $Z'$, since $Z'\cap \{ab,cd\}=\emptyset$ and $(P\cup Q)\cap \{ac,bd\}=\emptyset$; see \cref{fig:bigN-PQ}.
\end{proof}

\begin{lemma}
\label{lem:diamond}
Let $|S|\ge 10$ and let $(S_1,S_2)$ be a \bsp of $S$ such that $\vis(S_1,S_2)$ contains two crossing edges. Then $S$ contains three edge-disjoint plane spanning paths.
\end{lemma}

\begin{proof}
Note that $|S_i|\ge 5$ for $i=1,2$. We first argue that $\vis(S_1,S_2)$ contains two crossing edges $ad$ and $bc$, with $a,c\in S_1$ and $b,d\in S_2$, such that $a$ and $c$ are consecutive on $\pch(S_1)$ and $b$ and $d$ are consecutive on $\pch(S_2)$; see \cref{fig:diamond-abcd}.
To see this, suppose $a'd', b'c'$ are crossing edges of $\vis(S_1,S_2)$, with $a',c'\in S_1$ and $b',d'\in S_2$. \cref{lem:claimbad2case} implies that $a'b',c'd'\in E(\vis(S_1,S_2))$.
\cref{lem:seestwo} states that every point on the boundary path of $\ch(S_1)$ between $a'$ and $c'$ sees both $b'$ and $d'$, and any point on $\pch(S_2)$ between $b'$ and $d'$ sees both $a'$ and $c'$, and consequently, these points induce a complete bipartite subgraph of $\vis(S_1,S_2)$. Hence it suffices to take any $c=c'$ and as $a$ its neighbor along $\pch(S_1)$ in the direction to $a'$, and $d=d'$ and as $b$ its neighbor along $\pch(S_2)$ in the direction to $b'$.

\begin{figure}[t]
\centering
\subcaptionbox{crossing edges between consecutive vertices exist\label{fig:diamond-abcd}}{\includegraphics[page=1]{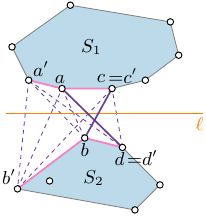}}
\hfill
\subcaptionbox{two free edges\\ $\rightarrow$ \cref{lem:2freeedges}\label{fig:diamond-twofree}}{\includegraphics[page=2]{diamond}}
\hfill
\subcaptionbox{switchable path of length three\\ $\rightarrow$ \cref{lem:bigN}\label{fig:diamond-switchable}}{\includegraphics[page=3]{diamond}}
\caption{Illustration for the proof of \cref{lem:diamond}}
\label{fig:diamond}
\end{figure}

Now consider the four points $a,b,c,d$, and consider a zig-zag path $Z$ with respect to the partition $(S_1,S_2)$. If $Z$ contains at most 2 of the edges $ab,ad,bc,cd$, $\vis(S_1,S_2)$ contains 2 free edges w.r.t. $Z$ and $S$ contains 3 edge-disjoint plane spanning paths according to \cref{lem:2freeedges}; see \cref{fig:diamond-twofree}. Obviously, $Z$ cannot contain all four of these edges, since $Z$ is non-crossing. In the remaining case, when $Z$ contains exactly three of the edges $ab,bc,ad,cd$, suppose w.l.o.g. that $Z$ contains $ab,bc,cd$. Then \cref{lem:bigN} implies that $S$ allows three edge-disjoint plane spanning paths; see \cref{fig:diamond-switchable}.
\end{proof}

\begin{lemma}
\label{lem:bigN+}
Let $|S|\ge 10$ and let $(S_1,S_2)$ be a \bsp of $S$ with $|S_1|\ge |S_2|$ such that $\vis(S_1,S_2)$ contains a switchable path of length~3 and a bridged vertex in $S_1$ which does not belong to the switchable path  and which is incident with at least two edges of
$\vis(S_1,S_2)$. Then $S$ contains three  edge-disjoint plane spanning paths.
\end{lemma}

\begin{proof}
Let $u\in S_1$ be a bridged vertex of $S_1$ which does not belong to a switchable path $P$ and is incident to at least two edges of $\vis(S_1,S_2)$; see \cref{fig:bigNPlus}. Let $Z$ be a zig-zag path starting in point $u$ (since $u$ is incident with a bridge of the partition $(S_1,S_2)$, such a path always exists as argued in the sketch of proof of \cref{lem:zigzag}). Then the degree of $u$ in $Z$ is 1,  and since $u$ is incident with at least 2 edges of $\vis(S_1,S_2)$
by the assumption, $u$ is incident with at least one free edge w.r.t. $Z$.
If $Z$ misses at least one of the edges of $P$, it leaves at least two free edges and $S$ contains three edge-disjoint plane spanning paths by \cref{lem:2freeedges}; see \cref{fig:bigNPlus-miss}. Otherwise, $Z$ contains all three edges of the switchable path $P$, and three edge-disjoint plane spanning paths exist according to \cref{lem:bigN}; see \cref{fig:bigNPlus-hit}.
\end{proof}

\begin{figure}[t]
\centering
\subcaptionbox{$Z$ misses one edge of $P$ $\rightarrow$ \cref{lem:2freeedges}\label{fig:bigNPlus-miss}}[.47\textwidth]{\includegraphics[page=1]{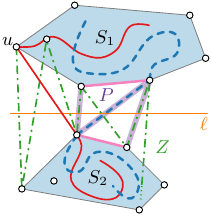}}
\hfil
\subcaptionbox{$Z$ contains $P$ $\rightarrow$ \cref{lem:bigN}\label{fig:bigNPlus-hit}}[.47\textwidth]{\includegraphics[page=2]{bigNPlus}}
\caption{Illustration for the proof of \cref{lem:bigN+}}
\label{fig:bigNPlus}
\end{figure}

We are now ready to prove our main result \cref{thm:main}.

\ThmMain*
\begin{proof}
Given a set of points $S$, apply \cref{thm:crucial}. If $S$ allows a \bsp with two crossing edges in its visibility graph, $S$ has three edge-disjoint plane spanning paths according to \cref{lem:diamond}. If $S$ allows a \bsp whose visibility graph contains a switchable path of length three and a bridged vertex, then $S$ has three edge-disjoint plane spanning paths according to \cref{lem:bigN+}. If none of these cases apply, then by \cref{thm:crucial} $S$ is the wheel configuration $W_n$ and $n$ is even, in which case $S$ has $n/2-1\ge 4$ edge-disjoint plane spanning paths according to \cref{prop:even-wheel}.
\end{proof}

\section{Upper Bound}\label{se:Upper}

It is immediate to see that any set of $n$ points cannot have more than $\lfloor \frac{n}{2} \rfloor$ edge-disjoint spanning paths.
In this section we give a linear upper bound on the number of edge-disjoint plane spanning paths such that the multiplicative factor of the bound is smaller than $\frac{1}{2}$. Our argument extends to paths a similar result about perfect matchings by Biniaz et al.~\cite{DBLP:journals/dmtcs/BiniazBMS15}.

\begin{figure}[t]
\centering
\subcaptionbox{The construction with $n=9$ and $k=2$.\label{fig:kwheel-1}}[.47\textwidth]{\includegraphics[page=22]{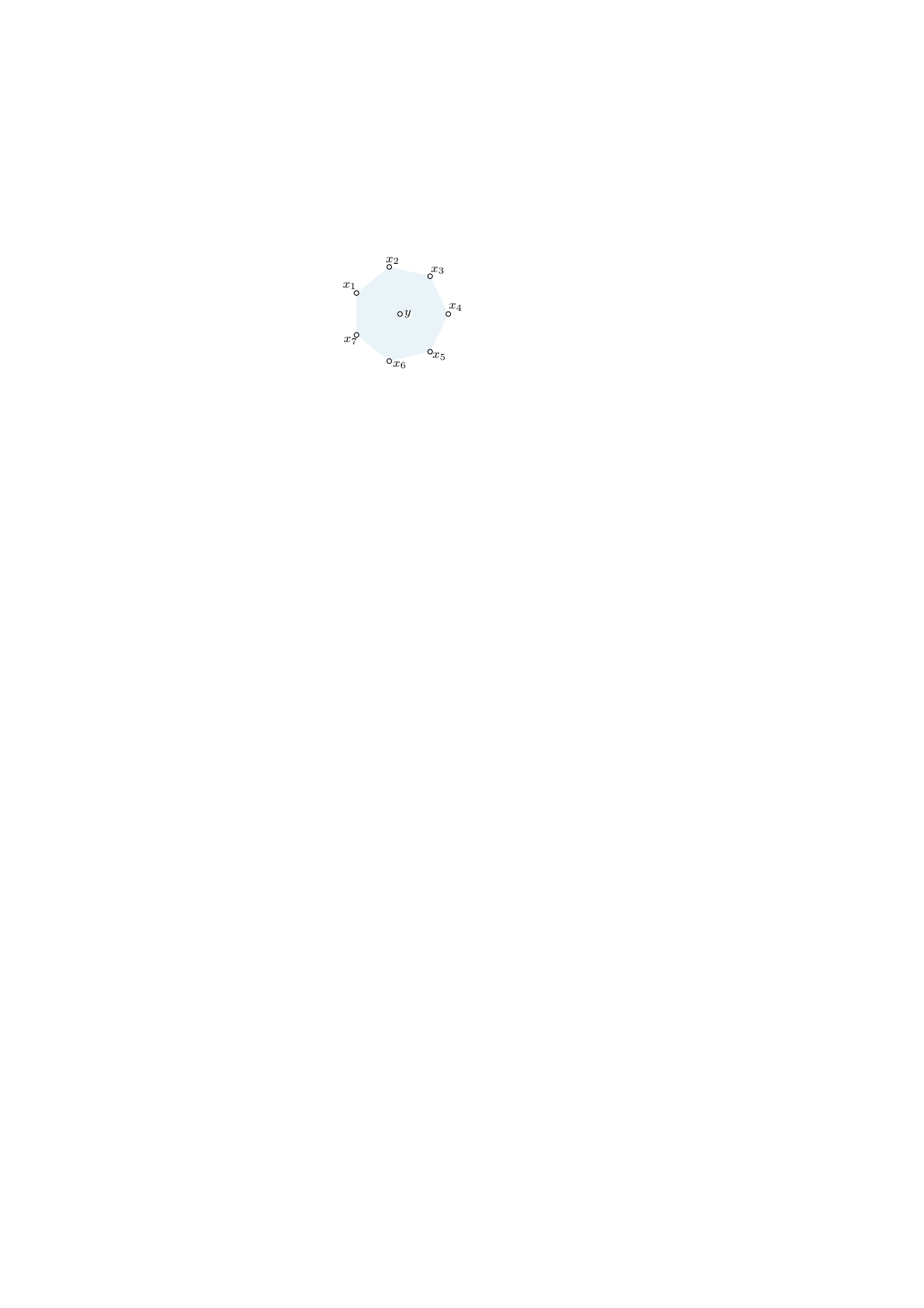}}
\hfil
\subcaptionbox{Construction of $e_i, e_j$, and $\Pi_0$.\label{fig:kwheel-2}}[.47\textwidth]{\includegraphics[page=23]{kwheel.pdf}}

\medskip

\subcaptionbox{The construction with $n=10$ and $k=2$.\label{fig:kwheel-3}}[.47\textwidth]{\includegraphics[page=24]{kwheel.pdf}}
\hfil
\subcaptionbox{Construction of $e_i, e_j$, and $\Pi_0$.\label{fig:kwheel-4}}[.47\textwidth]{\includegraphics[page=25]{kwheel.pdf}}
\caption{Illustration of the proof of \cref{th:upper}. Vertices in $S_1$ are drawn as circles and vertices in $S_2$ are drawn as squares. The region $C$ is shaded.}
\label{fig:kwheel}
\end{figure}
\begin{theorem}\label{th:upper}
For any $n \geq 6$, there exists a set of points $S$ such that the maximum number of edge-disjoint plane spanning paths is at most $\lceil \frac{n}{3} \rceil$.
\end{theorem}
\begin{proof}
Let $n \geq 6$ be an integer and let $k= \lfloor \frac{n}{3} \rfloor -1$. We consider a set $S$ of $n$ points consisting of a subset $S_1$ and a subset $S_2$. The set $S_1$ consists of the vertices of a regular $(n-k)$-gon $\Pi$; let $G$ be the complete geometric graph whose vertices are the points of $S_1$. The set $S_2$ consists of $k$ distinct points placed in the interior of a cell $C$ defined by the edges of $G$ and chosen as follows. If $|S_1| = n-k$ is odd, then $C$ is the cell containing the center of polygon $\Pi$; see \cref{fig:kwheel-1}. Otherwise, $C$ is any of the cells of $G$ whose boundary contains the center of polygon $\Pi$; see \cref{fig:kwheel-3}. 

We now prove that any plane spanning path of $S$ contains at least two edges of the convex hull of $S$. 
Let $e_1, e_2, \ldots e_{n-1}$ be the edges of a plane spanning path $P$ of $S$, as they appear in this order when traversing $P$ from one endpoint to the other. Since $|S_1| > |S_2| + 1$, path $P$ must contain at least one edge connecting two points of $S_1$. Let $e_i=uv$ be a longest such edge of $P$ ($1 \leq i \leq n-1$); see \cref{fig:kwheel-2,fig:kwheel-4}. Edge $e_i$ divides the interior of $\Pi$ into two regions. One of them, denoted as $\Pi_0$, contains $|S_2|$. By construction, the boundary of $\Pi_0$ contains at least $\frac{|S_1|}{2}-1$ points of $S_1 \setminus\{u,v\}$. Since $\frac{|S_1|}{2}-1 > |S_2|$, it follows that there is another edge of $P$, say $e_j$ ($j\neq i$), whose endpoints  both belong to $S_1 \cap \Pi_0$. By the choice of $e_i$, all points of $S_2$ lie between $e_i$ and $e_j$. It follows that $e_1$ and $e_{n-1}$ are edges of the convex hull of $S$.

Since we have shown that any plane spanning path of $S$ contains at least two edges of the convex hull of $S$, there cannot be more than $\lfloor \frac{|S_1|}{2} \rfloor = \lceil \frac{n}{3} \rceil$ edge-disjoint plane spanning paths of $S$. 
\end{proof}

\section{Conclusion}\label{se:conclusion}

In this paper, we showed that every set of at least 10 points in general position admits three edge-disjoint plane spanning paths. While we mostly focused on the combinatorial part, it is easy to see that our constructive arguments give rise to a polynomial time algorithm.
We note that it's a simple exercise to verify that the 6-wheel configuration does not contain three edge-disjoint spanning paths. On the other hand, it was verified by a computer program that all sets of 7, 8, or 9 points contain three edge-disjoint spanning paths~\cite{Scheucher}.

Of course, reducing the gap highlighted by \cref{thm:main} and \cref{th:upper} appears to be the most interesting question. Some intermediate steps that could be leading to this goal are listed below.

Can \cref{thm:pq-paths} be strengthened?
Does any set of $n$ points (for large enough $n$) in general position contain, for any choice of two distinct points $s,t$ (not necessarily lying on the boundary of the convex hull of the set),  edge-disjoint plane spanning paths starting in these points and not containing the edge $st$?

Let us mention in this connection that \cref{thm:main} cannot be strengthened in the way of \cref{thm:pq-paths}. If the points of $S$ are in convex position and the starting points of the three paths are prescribed to be the same point of $S$, then three  edge-disjoint plane spanning paths do not exist (for a convex position, every path must start with an edge of $\pch(S)$, and for a single point, there are only two such edges). The question is currently open to us if the starting points are required to be distinct.

\section{Acknowledgments}

The second and fourth authors gratefully acknowledge the support of Czech Science Foundation through research grant GA\v{C}R 23-04949X. The work of the third author is partially supported by "(i) MUR PRIN Proj. 2022TS4Y3N - ``EXPAND: scalable algorithms for EXPloratory Analyses of heterogeneous and dynamic Networked Data''; (ii) MUR PRIN Proj. 2022ME9Z78 - ``NextGRAAL: Next-generation algorithms for constrained GRAph visuALization''. All authors acknowledge the working atmosphere of Homonolo meetings where the research was initiated and part of the results were obtained, as well as of Bertinoro Workshops on Graph Drawing, during which we could meet and informally work on the project. Our special thanks go to Manfred Scheucher whose experimental results encouraged us to keep working on the problem in the time when all hopes for a solution seemed far out of sight.

\bibliographystyle{elsarticle-harv}
\bibliography{biblio,bib/knizky,bib/nakryti,bib/sborniky,bib/litRN}
\end{document}